\documentclass[12pt]{article}
\usepackage{amssymb}
\usepackage{amsfonts}
\usepackage{amsmath}

\newtheorem{theorem}{Theorem}

\newtheorem{claim}{Claim}

\newtheorem{corollary}{Corollary}

\newtheorem{definition}{Definition}

\newtheorem{lemma}{Lemma}

\newtheorem{proposition}{Proposition}
\newtheorem{remark}{Remark}

\begin{document}

\title{School Choice Problems with Priority Adjustments\thanks{%
This paper builds on some of the results contained in an earlier manuscript
entitled \textquotedblleft School Choice with Multiple
Priorities,\textquotedblright\ while adding new results and substantially
revising. This work was supported by JSPS KAKENHI Grant Number 25K05004.}}
\author{Minoru Kitahara\thanks{%
Department of Economics, Osaka Metropolitan University} \and Yasunori Okumura%
\thanks{%
Corresponding author. Department of Logistics and Information Engineering,
TUMSAT, 2-1-6, Etchujima, Koto-ku, Tokyo, 135-8533 Japan.
Phone:+81-3-5245-7300. Fax:+81-3-5245-7300. E-mail: yokumu0@kaiyodai.ac.jp}}
\maketitle

\begin{center}
\textbf{Abstract }
\end{center}

This paper studies school choice when each school has an original priority
order and an adjusted priority order reflecting policies such as affirmative
action. We introduce a weak notion of stability that permits justified envy
under one priority order only when the opposite priority relation is
supported by the other. We characterize weakly stable matchings as stable
matchings under the intersection of the two priority profiles. Using the
efficiency-adjusted deferred acceptance algorithm, we obtain a weakly stable
matching that cannot be Pareto improved upon, within the set of weakly
stable matchings, for any group containing all improved students by the
adjustment. We also show that any Pareto improvement over this outcome
necessarily creates justified envy under both priority profiles. Finally,
our mechanism is responsive to stronger priority adjustments: the outcome
under a stronger affirmative action policy is not Pareto dominated by that
under a weaker policy for any group containing all students improved by the
stronger adjustment.

\textbf{Keywords: }School choice; Multiple priority orders; Priority-based
affirmative action; EADA algorithm; Responsive to improvements

\textbf{JEL classification}: C78; D47; D71; D78\newpage

\section{Introduction}

This study considers a setting in which each school has both an original
priority order and an adjusted priority order over students. For example, a
school's original priority order may be determined by students' entrance
examination scores, while its adjusted priority order may incorporate policy
considerations such as affirmative action. Thus, if a minority student $i$
has a lower examination score at a school than a majority student $j$, the
original priority order ranks $j$ above $i$, but the adjustment may reverse
their relative priority and rank $i$ above $j$.

Our main insight is that priority relations supported by both the original
and adjusted priority orders should be distinguished from those supported by
only one of them. We regard a priority relation supported by both orders as
a compelling claim that should not be violated. By contrast, when the two
orders rank a pair of students differently, assigning either student is
supported by one of the two externally specified priority criteria.
Accordingly, to accommodate the coexistence of the two priority orders, we
weaken the conventional notion of fairness based on a single priority order:
we allow justified envy under one priority order only when the opposite
priority relation is supported by the other. This weakening permits a
broader set of matchings and thereby creates scope for efficiency
improvements.\footnote{%
A related idea appears in Ayg\"{u}n and B\'{o} (2021), whose fairness
criterion permits a lower-scoring student to be admitted over a
higher-scoring student only when the former has a stronger privilege claim.
Their criterion is formulated in terms of categorical privilege claims,
whereas our framework also accommodates more moderate forms of preferential
treatment, such as bonus-point rules, under which possessing a privileged
status does not necessarily reverse every score-based priority comparison.}

We make three main contributions. First, we introduce and characterize a
notion of weak stability based on the two priority orders. Second, we
propose a mechanism that selects a weakly stable outcome that is not Pareto
dominated by any other weakly stable outcome for the students whose
priorities are improved. Third, we show that this mechanism is responsive to
stronger priority adjustments, although such responsiveness is incompatible
with conventional stability.

First, our study contributes to the literature on the trade-off between
efficiency and stability in school choice; Kitahara and Okumura (2023)
provide an overview of various studies on methods for improving efficiency
while retaining some degree of fairness. In our approach, permissible
priority violations are determined by the schools' exogenously given
original and adjusted priority orders. We characterize weakly stable
matchings under the two priority orders as stable matchings under their
intersection. Based on this characterization and the efficiency-adjusted
deferred acceptance (EADA) algorithm of Kesten (2010), we provide a method
for obtaining a student-optimally weakly stable matching (SOWSM), defined as
a weakly stable matching that is not Pareto dominated by any other weakly
stable matching.\footnote{%
Cerrone et al. (2024) provide an overview of various studies that employ the
EADA algorithm.} We further show that any Pareto improvement over an SOWSM
necessarily creates justified envy supported by both the original and
adjusted priority orders. Thus, such an improvement requires violating a
priority relation that can be justified by neither priority criterion.

Second, we strengthen this efficiency conclusion for students whose
priorities are improved by the adjustment. We call a student an improved
student if, at some school, the student's adjusted priority is higher than
that of another student who had higher priority under the original order. We
show that the EADA algorithm produces a weakly stable matching that is
Pareto undominated, among weakly stable matchings, for any group containing
all improved students. This group-specific result is stronger than Pareto
undominance for all students: it rules out a weakly stable matching that
makes all members of the relevant group weakly better off and at least one
of them strictly better off, even if students outside the group are made
worse off. In particular, if students not targeted by an affirmative action
policy receive no bonus points, then the targeted group contains all
improved students. Hence, no other weakly stable matching Pareto dominates
the EADA outcome with respect to the targeted group.

Third, we establish responsiveness to stronger priority adjustments. We
compare two adjusted priority profiles, one of which induces at least all
the priority improvements induced by the other relative to the original
profile. Although Kojima (2012) shows that no stable mechanism can guarantee
responsiveness to stronger affirmative action, we show that the outcome
under the weaker adjustment does not Pareto dominate the outcome under the
stronger adjustment for any group containing all students whose priorities
are improved under the stronger adjustment. Thus, by requiring weak rather
than conventional stability, our mechanism reconciles meaningful fairness
protection with responsiveness to stronger affirmative action.

Our first application is priority-based affirmative action, which has been
extensively studied in the school choice literature.\footnote{%
The responsiveness to improvements is related to the respect for
improvements property, as introduced by Balinsky and S\"{o}nmez (1999).
However, the latter property considers the improvement for one student only.
Hirata et al. (2022) consider the respect for group improvements property,
which is more related to ours. They show that any stable mechanism fails to
satisfy the property.} See, for example, Abdulkadiro\u{g}lu and Grigoryan
(2025) for an overview of various studies on this topic. Among them, we
consider a priority-based affirmative action policy discussed by Kojima
(2012), Afacan and Salman (2016), Jiao and Tian (2018), Jiao and Shen
(2021), Jiao et al. (2022) and Dur and Xie (2023).\footnote{%
For other types of affirmative action, there are quota-based affirmative
action and reserve-based affirmative action. Moreover, Kitahara and Okumura
(2020, 2026a) propose a hybrid approach that combines these three types of
affirmative action.} For example, in Chinese college admissions, minority
students are awarded bonus points in the national college entrance
examination. See, for example, Wang (2009) for details of this policy.

Jiao and Shen (2021) obtain responsiveness by allowing only minority
students to consent under EADA, but the resulting matching may violate the
priorities among minority students. By contrast, our mechanism permits only
violations justified by disagreements between the original and adjusted
priorities. Unlike the approaches of Jiao et al. (2022) and Dur and Xie
(2023), we impose no additional restriction on priority profiles; instead,
we weaken stability while retaining stronger welfare protection for the
targeted students.

Our second application concerns a setting in which each school uses two
component tests to determine its priority order; Kitahara and Okumura (2021)
briefly discuss such a setting. Each school may choose the weight assigned
to one component from an admissible interval and rank students according to
the resulting weighted-average scores. For each admissible weight profile,
we construct a matching that is weakly stable with respect to the priority
orders generated by the endpoints of these intervals. Then, this matching
weakly Pareto-dominates every matching that is stable under the resulting
priority profile. If it admits justified envy under that profile, the Pareto
dominance is strict. Nevertheless, every such instance of justified envy is
reversed under some other admissible choice of weights: under that choice,
the envying student has a lower composite score than the student she envies.

\section{Model}

\subsection{Basic Setting}

Let $I$ and $S$ be the sets of students and schools, respectively, where $%
\left\vert I\right\vert \geq 3$ and $\left\vert S\right\vert \geq 3$. Let $%
q_{s}$ be the capacity of schools $s\in S$. Let $q=\left( q_{s}\right)
_{s\in S}$ denote the capacity vector.

Each student $i\in I$ has a (strict) linear order (asymmetric, transitive,
and complete) on $S\cup \left\{ \emptyset \right\} $ denoted by $P_{i}$
representing the preferences of student $i$, where $sP_{i}s^{\prime }$ (or
equivalently $\left( s,s^{\prime }\right) \in P_{i}$) means that $i$ prefers 
$s\in S\cup \left\{ \emptyset \right\} $ to $s^{\prime }\in S\cup \left\{
\emptyset \right\} $ and $\emptyset $ represents the outside option. If $%
sP_{i}\emptyset $, then school $s$ is said to be \textbf{acceptable} to $i$.
Further, $sR_{i}s^{\prime }$ means $sP_{i}s^{\prime }$ or $s=s^{\prime }$.

Let $\succ _{s}$ denote the priority order of school $s$, where $i\succ
_{s}j $, or equivalently $(i,j)\in \succ _{s}$, means that student $i$ has
higher priority than student $j$ at school $s$. We assume that $\succ _{s}$
is a strict partial order, that is, transitive and asymmetric, for every $%
s\in S$. See Kitahara and Okumura (2026b) for a study of school-choice
problems with partial priority orders.

We say that a \textbf{matching} $\mu $ is a mapping satisfying $\mu (i)\in
S\cup \{\emptyset \},$ $\mu \left( s\right) \subseteq I$, $\mu (i)=s$ if and
only if $i\in \mu \left( s\right) ,$ and $\left\vert \mu \left( s\right)
\right\vert \leq q_{s}$. Note that $\mu (i)=\emptyset $ means that $i$ is
unmatched to any school and $\mu (i)=s\in S$ means that $i$ is matched to $s$
under a matching $\mu $. A matching $\mu $ is said to be \textbf{%
individually rational} if $\mu \left( i\right) R_{i}\emptyset $ for all $%
i\in I$. A matching $\mu $ is said to be \textbf{non-wasteful} if $sP_{i}\mu
\left( i\right) $ implies $\left\vert \mu \left( s\right) \right\vert =q_{s}$
for all $i\in I$ and all $s\in S$.

Student $i$ has \textbf{justified envy} under $\succ $ toward $j$ at $\mu $
if $s=\mu \left( j\right) P_{i}\mu \left( i\right) \ $and $i\succ _{s}j$. If
no student has justified envy under $\succ $ at $\mu $, then $\mu $ is said
to be \textbf{fair }under $\succ $. A matching $\mu $ is \textbf{stable}
under $\succ $ if it is individually rational, non-wasteful and fair under $%
\succ $.

A matching $\mu $ is \textbf{Pareto dominated for }$I^{\prime }\subseteq I$
by $\mu ^{\prime }$ if $\mu ^{\prime }\left( i\right) P_{i}\mu \left(
i\right) $\ for some $i\in I^{\prime }$ and $\mu ^{\prime }\left( i^{\prime
}\right) R_{i^{\prime }}\mu \left( i^{\prime }\right) $ for all $i^{\prime
}\in I^{\prime }$. Moreover, a matching $\mu $ is\textbf{\ weakly} \textbf{%
Pareto dominated for }$I^{\prime }\subseteq I$ by $\mu ^{\prime }$ if it is
Pareto dominated for $I^{\prime }$ by $\mu ^{\prime }$ or $\mu \left(
i\right) =\mu ^{\prime }\left( i\right) $ for all $i\in I^{\prime }$. When $%
I^{\prime }=I$ is additionally satisfied, we simply write that $\mu $ 
\textbf{is (weakly) Pareto dominated }by $\mu ^{\prime }$, as defined in
many previous studies.

A matching $\mu $ is an $I^{\prime }$\textbf{-optimally stable matching}
under $\succ $\textbf{\ }if it is stable under $\succ $ and is not Pareto
dominated for $I^{\prime }$ by any stable matching under $\succ $. When $%
I^{\prime }=I$ is additionally satisfied, a matching $\mu $ is said to be 
\textbf{student-optimally stable matching} (hereafter \textbf{SOSM}) under $%
\succ $\textbf{\ }if it is stable under $\succ $ and is not Pareto dominated
for all students by any stable matching under $\succ $.

\subsection{Two Priority Orders and Weak Fairness}

Each school $s\in S$ has two priority orders on $I$ denoted by $\succ
_{s}^{O}$ and $\succ _{s}^{A}$. We assume that the two priority orders of
each school are linear orders and let $\mathcal{L}$ be the set of all
possible linear orders on $I$. We call $\succ _{s}^{O}$ and $\succ _{s}^{A}$
the \textbf{original priority order} and the \textbf{adjusted priority order}%
, respectively. Moreover, let $\succ ^{O}=\left( \succ _{s}^{O}\right)
_{s\in S}\in \mathcal{L}^{\left\vert S\right\vert }$ be the profile of the
original priority orders, and let $\succ ^{A}=\left( \succ _{s}^{A}\right)
_{s\in S}\in \mathcal{L}^{\left\vert S\right\vert }$ be the profile of the
adjusted priority orders.

That is, $i\succ _{s}^{O}j$ (resp. $i\succ _{s}^{A}j$) means that, under the
original priority (resp. the adjusted priority), student $i$ has higher
priority than student $j$ at school $s$. Note that $\succ _{s}^{O}=\succ
_{s}^{A}$ is allowed to be satisfied for some $s\in S$; that is, some school
is allowed to have only one priority order. However, we assume $\succ
_{s}^{O}\neq \succ _{s}^{A}$ for some $s\in S$; equivalently, $i\succ
_{s}^{O}j$ and $j\succ _{s}^{A}i$ are satisfied for some $i,j\in I$ and some 
$s\in S$.

For example, the original priority order may be based on students' entrance
examination scores, whereas the adjusted priority order may incorporate
affirmative action or other policy adjustments that give additional weight
to students from disadvantaged or underrepresented groups. In this case, the
adjusted priority order may reflect broader policy objectives in addition to
students' academic performance. For instance, $i\succ _{s}^{O}j$ and $j\succ
_{s}^{A}i$ may represent a situation in which student $i$ has a higher
entrance examination score than student $j$, but $j$ has higher priority
under the adjusted priority at $s$ because $j$ belongs to a minority group
whereas $i$ belongs to a majority group. In Section 4, we explicitly discuss
such priority reversals.

\begin{definition}
Suppose that a student $i$ has justified envy toward $j$ at $\mu $ under $%
\succ ^{A}$. We say that the justified envy is\textbf{\ induced by the
adjustment} if $j\succ _{\mu \left( j\right) }^{O}i$.
\end{definition}

This refers to a situation in which the priority comparison that gives rise
to student $i$'s justified envy under the adjusted priority profile $\succ
^{A}$ results from a reversal of the original priority relation. More
precisely, suppose that

\begin{equation*}
s=\mu \left( j\right) P_{i}\mu \left( i\right) \text{, }i\succ _{s}^{A}j%
\text{ and }j\succ _{s}^{O}i\text{.}
\end{equation*}%
Then, $i$'s justified envy toward $j$ at $\mu $ under $\succ ^{A}$ is
induced by the priority adjustment at $s$. In other words, although $j$
originally has higher priority than $i$ at $s$, the adjustment reverses
their priority ranking, so that $i$ has higher priority than $j$ under $%
\succ _{s}^{A}$.

For instance, $j$ has a higher examination score than $i$. Then, $j\succ
_{s}^{O}i$. However, under an affirmative action policy, $i\succ _{s}^{A}j$,
because $j$ is a majority student and $i$ is a minority student. If $s=\mu
\left( j\right) P_{i}\mu \left( i\right) $, then $i$ has justified envy
toward $j$ at $\mu $ under $\succ _{s}^{A}$. However, since the original
priority order ranks $j$ above $i$, this envy is justified only as a result
of the affirmative action adjustment. We refer to such justified envy as
justified envy induced by the adjustment.

A matching $\mu $ is \textbf{A-fair }under $(\succ ^{O},\succ ^{A})$\textit{%
\ }if any justified envy at $\mu $ under $\succ ^{A}$ is induced by the
adjustment. Trivially, if $\mu $ is fair\textbf{\ }under $\succ ^{A}$, then
it is also A-fair\textbf{\ }under $(\succ ^{O},\succ ^{A})$.

Our focus on A-fairness is motivated by the distinction between priority
relations supported by both priority orders and those supported by only one
of them. Suppose that student $i$ prefers the school assigned to student $j$%
. If both the original and adjusted priority orders rank $i$ above $j$, then
assigning the seat to $j$ cannot be justified by either priority order. By
contrast, if the adjusted priority order ranks $i$ above $j$, while the
original priority order ranks $j$ above $i$, assigning the seat to $j$ is
supported by the original priority order. We therefore regard the former
type of justified envy as more serious than the latter. Accordingly,
A-fairness requires the elimination of justified envy supported by both
priority orders, while allowing justified envy under the adjusted priority
order when the opposite priority relation is supported by the original
priority order.

We next introduce the following counterpart to A-fairness.

\begin{definition}
Suppose that a student $i$ has justified envy toward $j$ at $\mu $ under $%
\succ ^{O}$. We say that the justified envy is\textbf{\ eliminated by the
adjustment} if $j\succ _{\mu \left( j\right) }^{A}i$.
\end{definition}

This refers to a situation in which student $i$'s justified envy toward $j$
under the original priority profile $\succ ^{O}$ is no longer justified
under the adjusted priority profile $\succ ^{A}$. More precisely, suppose
that

\begin{equation*}
s=\mu \left( j\right) P_{i}\mu \left( i\right) \text{, }i\succ _{s}^{O}j%
\text{ and }j\succ _{s}^{A}i\text{.}
\end{equation*}%
Then, $i$'s justified envy toward $j$ under $\succ ^{O}$ is eliminated by
the priority adjustment at $s$.

A matching $\mu $ is \textbf{O-fair} under $(\succ ^{O},\succ ^{A})$ if
every justified envy at $\mu $ under $\succ ^{O}$ is eliminated by the
adjustment. Clearly, if $\mu $ is fair under $\succ ^{O}$, then it is also
O-fair under $(\succ ^{O},\succ ^{A})$.

If a matching is O-fair under $(\succ ^{O},\succ ^{A})$, there may still
exist justified envy under the original priority profile. However, the
priority relation that makes such envy justified under $\succ ^{O}$ is
reversed by the adjustment. Since the adjustment is introduced to reflect
some specific policy objective, such envy is justified according to the
original priority order but is no longer justified once that policy
objective is taken into account. Therefore, we allow only such cases of
justified envy under the original priority profile.

This idea is closely related to the fairness criterion of Ayg\"{u}n and B%
\'{o} (2021). Their criterion permits a lower-scoring student to be admitted
over a higher-scoring student when the former claims at least one privilege
not claimed by the latter. Our framework also accommodates more moderate
forms of preferential treatment, such as bonus-point rules, under which
possessing a privilege does not necessarily reverse every score-based
priority comparison. Accordingly, under our notion of O-fairness, a
violation of the original priority is permitted only when the adjustment
actually reverses the priority ranking of the two students. See Section 4.1
for a detailed example.

We now characterize the O-fair and A-fair matchings under $(\succ ^{O},\succ
^{A})$.

For each $s\in S$, define $\left( \succ _{s}^{O}\cap \succ _{s}^{A}\right) $
as the binary relation over $I$ such that 
\begin{equation*}
\left( i,j\right) \in \left( \succ _{s}^{O}\cap \succ _{s}^{A}\right)
\end{equation*}%
if and only if $i\succ _{s}^{O}j$ and $i\succ _{s}^{A}j$. We also write $%
\left( \succ ^{O}\cap \succ ^{A}\right) =\left( \succ _{s}^{O}\cap \succ
_{s}^{A}\right) _{s\in S}$.

\begin{proposition}
For any matching $\mu $, the following three statements are equivalent: (1) $%
\mu $ is fair under $\left( \succ ^{O}\cap \succ ^{A}\right) $. (2) $\mu $
is A-fair under $(\succ ^{O},\succ ^{A})$. (3) $\mu $ is O-fair under $%
(\succ ^{O},\succ ^{A})$.
\end{proposition}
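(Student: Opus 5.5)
The plan is to reduce all three statements to a single condition on the matching $\mu$. For any pair of students $i,j$ with $s=\mu(j)P_i\mu(i)$, call $(i,j)$ a \emph{bad pair} at $\mu$ if $i\succ_s^O j$ and $i\succ_s^A j$. I will show that each of (1), (2) and (3) is equivalent to the absence of bad pairs at $\mu$. The only structural fact needed is that $\succ_s^O$ and $\succ_s^A$ are linear orders on $I$. Hence, for distinct $i,j$, exactly one of $i\succ_s^O j$ and $j\succ_s^O i$ holds, and likewise for $\succ_s^A$.

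For (1), I unpack the definition directly. Student $i$ has justified envy toward $j$ under $(\succ^O\cap\succ^A)$ exactly when $s=\mu(j)P_i\mu(i)$ and $(i,j)\in(\succ_s^O\cap\succ_s^A)$, which is precisely a bad pair. So (1) holds iff there is no bad pair.

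For (2), consider any justified envy under $\succ^A$: $s=\mu(j)P_i\mu(i)$ and $i\succ_s^A j$. Here $i\neq j$, since $\mu(j)\neq\mu(i)$. By linearity of $\succ_s^O$, either $j\succ_s^O i$ or $i\succ_s^O j$.
\begin{itemize}
\item If $j\succ_s^O i$, the envy is induced by the adjustment.
\item If $i\succ_s^O j$, then $(i,j)$ is a bad pair.
\end{itemize}
So A-fairness fails iff some justified envy under $\succ^A$ is not induced by the adjustment, which happens iff a bad pair exists.

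For (3), the argument is symmetric with the roles of $O$ and $A$ swapped. Justified envy under $\succ^O$ fails to be eliminated by the adjustment exactly when $i\succ_s^A j$ as well, again by linearity of $\succ_s^A$. This is precisely a bad pair. Chaining the three equivalences gives (1)$\Leftrightarrow$(2)$\Leftrightarrow$(3).

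There is no real obstacle here. The one point needing care is to invoke completeness of the linear orders for distinct students, and to note that $i\neq j$ is automatic because $\mu(j)P_i\mu(i)$ forces $\mu(j)\neq\mu(i)$. Without completeness, for example if the orders were merely partial, the dichotomy could fail.
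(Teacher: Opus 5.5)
Your proof is correct and follows essentially the same route as the paper's: both use completeness of the linear orders $\succ_s^O$ and $\succ_s^A$ to turn ``not induced by (resp.\ not eliminated by) the adjustment'' into ``$i$ is also ranked above $j$ under the other order,'' which is exactly justified envy under $(\succ^O\cap\succ^A)$. Your ``bad pair'' condition packages the paper's two contrapositive directions for (1)$\Leftrightarrow$(2) into a single dichotomy, and you also write out the O-fairness case (3) that the paper dismisses as ``similar.''
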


\textbf{Proof.} Here we only show that (1) and (2) are equivalent, as the
equivalence between (1) and (3) can be proved in a similar way. First, let $%
\mu $ be an arbitrary matching that is not fair under $\left( \succ ^{O}\cap
\succ ^{A}\right) $. Then, there are students $i$ and $j$ such that $s=\mu
\left( j\right) P_{i}\mu \left( i\right) $ and $\left( i,j\right) \in \left(
\succ _{s}^{O}\cap \succ _{s}^{A}\right) $. Since both $\succ _{s}^{O}$ and $%
\succ _{s}^{A}$ are linear orders, both $i\succ _{s}^{O}j$ and $i\succ
_{s}^{A}j$. Hence, $\mu $ is not A-fair under $(\succ ^{O},\succ ^{A})$.

Second, let $\mu $ be an arbitrary matching that is not A-fair under $(\succ
^{O},\succ ^{A})$. Then, there are students $i$ and $j$ such that $s=\mu
\left( j\right) P_{i}\mu \left( i\right) $, $i\succ _{s}^{O}j$ and $i\succ
_{s}^{A}j$. Then, $\left( i,j\right) \in \left( \succ _{s}^{O}\cap \succ
_{s}^{A}\right) $ and therefore, $\mu $ is not fair under $\left( \succ
^{O}\cap \succ ^{A}\right) $. \textbf{Q.E.D.}\newline

In light of this result, a matching is said to be \textbf{weakly stable }%
under $(\succ ^{O},\succ ^{A})$ if it is individually rational,
non-wasteful, and A-fair (or equivalently O-fair) under $(\succ ^{O},\succ
^{A})$. Trivially, any stable matching under $\succ ^{O}$ or under $\succ
^{A}$ is weakly stable\textbf{\ }under $(\succ ^{O},\succ ^{A})$. In other
words, under a weakly stable matching, any justified envy under the adjusted
priority profile must be induced by the adjustment, whereas any justified
envy under the original priority profile must be eliminated by the
adjustment.

Next, a matching $\mu $ is an $I^{\prime }$\textbf{-optimally weakly stable
matching} under $(\succ ^{O},\succ ^{A})$ if $\mu $ is weakly stable\textbf{%
\ }under $(\succ ^{O},\succ ^{A})$ and is not Pareto dominated for $%
I^{\prime }$ by any other weakly stable matching under $(\succ ^{O},\succ
^{A})$. When $I^{\prime }=I$, such a matching $\mu $ is called a \textbf{%
student-optimally weakly stable matching} (\textbf{SOWSM}) under $(\succ
^{O},\succ ^{A})$; that is, it is weakly stable under $(\succ ^{O},\succ
^{A})$ and is not Pareto dominated by any weakly\ stable matching under $%
(\succ ^{O},\succ ^{A})$.

An SOWSM under $(\succ ^{O},\succ ^{A})$ is not Pareto dominated by any SOSM
under $\succ ^{O}$ or $\succ ^{A}$. However, an SOSM under $\succ ^{O}$ (or $%
\succ ^{A}$) may be Pareto dominated by some SOWSM under $(\succ ^{O},\succ
^{A})$. In other words, by weakening the notion of stability, it may be
possible to obtain a more efficient matching.

We have the following result.

\begin{proposition}
Fix any $I^{\prime }\subseteq I$. Let $\mu ^{\ast }$ be an $I^{\prime }$%
-optimally weakly stable matching under $(\succ ^{O},\succ ^{A})$. If $\mu
^{\ast }$ is Pareto dominated for $I^{\prime }$ by $\mu ^{\prime },$ then
there is a student $i$ who has justified envy toward $j$ at $\mu ^{\prime }$
under both $\succ ^{O}$ and $\succ ^{A}$.
\end{proposition}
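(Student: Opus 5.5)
The plan is to argue by contraposition. Suppose $\mu^{\ast}$ is an $I^{\prime}$-optimally weakly stable matching and $\mu^{\prime}$ Pareto dominates it for $I^{\prime}$, but no student has justified envy at $\mu^{\prime}$ under both $\succ^{O}$ and $\succ^{A}$. By Proposition 1, this says exactly that $\mu^{\prime}$ is fair under $\left( \succ^{O}\cap \succ^{A}\right)$. I will then build from $\mu^{\prime}$ a weakly stable matching $\mu^{\prime\prime}$ that still Pareto dominates $\mu^{\ast}$ for $I^{\prime}$. This contradicts the $I^{\prime}$-optimality of $\mu^{\ast}$. The obstruction is that $\mu^{\prime}$ need not be individually rational or non-wasteful, so the work is to repair these two properties while preserving fairness under $\succ^{O}\cap \succ^{A}$ and without hurting anyone in $I^{\prime}$.

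\emph{Step 1 (individual rationality).} Every $i\in I^{\prime}$ satisfies $\mu^{\prime}(i)R_{i}\mu^{\ast}(i)R_{i}\emptyset$, so only students outside $I^{\prime}$ can be at unacceptable schools under $\mu^{\prime}$. Unassign all such students and call the result $\nu$. Removing a student $j$ from a school cannot create new justified envy, since envy toward $j$ requires $j$ to be assigned. Moreover, the other students' assignments and preferences are unchanged. Hence $\nu$ is individually rational and fair under $\left( \succ^{O}\cap \succ^{A}\right)$. In addition, $\nu(i)=\mu^{\prime}(i)$ for all $i\in I^{\prime}$, so $\nu$ still Pareto dominates $\mu^{\ast}$ for $I^{\prime}$.

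\emph{Step 2 (non-wastefulness).} Next I will show that any individually rational matching $\nu$ that is fair under a profile of strict partial orders $\succeq$ (here $\succ^{O}\cap \succ^{A}$) is weakly Pareto dominated, for all students, by some matching stable under that profile. I would try the standard route: run the student-proposing deferred acceptance algorithm starting from $\nu$, in the manner of Blum, Roth and Rothblum. Students whose current school has a waiting better option keep proposing to schools they prefer. A school rejects only a student who is not ranked above (under $\succ^{O}\cap \succ^{A}$) some accepted proposer. Here I would use a fixed linear extension of each $\succ_{s}^{O}\cap \succ_{s}^{A}$ to break the incomparabilities. In such a procedure no student ever ends worse than under $\nu$, and the terminal matching is stable under the linear extension, hence stable under $\left( \succ^{O}\cap \succ^{A}\right)$.

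The delicate point is that fairness of $\nu$ under the partial order need not imply fairness under an arbitrary linear extension. I would therefore choose the extension at each $s$ so that students in $\nu(s)$ are placed above the others whenever $\succ^{O}\cap \succ^{A}$ does not force the opposite. I then need to check that this yields a transitive extension under which $\nu$ is still fair. The check should follow because, for any $k$ who prefers $s$ to $\nu(k)$, fairness gives that no $j\in\nu(s)$ is below $k$ in the partial order. Alternatively, I could run a vacancy-filling procedure directly: when a seat at $s$ is vacant, admit the $\succ^{O}\cap \succ^{A}$-maximal student among those who prefer $s$ to their current school. Each step is a Pareto improvement, so the procedure terminates, and choosing maximal students preserves fairness.

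Let $\mu^{\prime\prime}$ be the resulting stable matching under $\left( \succ^{O}\cap \succ^{A}\right)$. By Proposition 1, $\mu^{\prime\prime}$ is weakly stable under $(\succ^{O},\succ^{A})$. Every $i\in I^{\prime}$ satisfies $\mu^{\prime\prime}(i)R_{i}\nu(i)R_{i}\mu^{\ast}(i)$, with strict preference for the student in $I^{\prime}$ who strictly gains under $\mu^{\prime}$. Thus $\mu^{\ast}$ is Pareto dominated for $I^{\prime}$ by a weakly stable matching, which is the desired contradiction. I expect Step 2 to be the main obstacle, namely verifying that the improvement procedure preserves fairness under a merely partial priority order.
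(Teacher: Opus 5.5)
Your proposal is correct and is essentially the paper's proof: the paper also argues by contradiction, unassigns the students placed at unacceptable schools, and then runs exactly your vacancy-filling procedure (give a vacant seat at $s$ to a $(\succ^{O}\cap\succ^{A})$-maximal student among those preferring $s$ to their current assignment), using that each step is a Pareto improvement and that maximality rules out new justified envy. Your primary DA route would also work if the extension at $s$ is obtained by adding to $\succ^{O}_{s}\cap\succ^{A}_{s}$ only the pairs $(j,k)$ with $j\in\nu(s)$ and $sP_{k}\nu(k)$; fairness of $\nu$ is exactly what keeps these pairs from closing a cycle, whereas ranking $\nu(s)$ above \emph{every} student not directly forced above them can create one, so the vacancy-filling argument is the more direct choice.
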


The proof is provided in Appendix A.1.

By this result, any Pareto improvement over an $I^{\prime }$-optimal weakly
stable matching for $I^{\prime }$ must violate fairness under $(\succ
^{O}\cap \succ ^{A})$; that is, there must exist students $i$ and $j$ such
that $i$ has justified envy toward $j$ under both $\succ ^{O}$ and $\succ
^{A}$.

\subsubsection*{Example 1}

Suppose $I=\left\{ i_{1},i_{2},i_{3},i_{4}\right\} $, $S=\left\{
s_{1},s_{2},s_{3},s_{4}\right\} $ and $q_{s}=1$ for all $s\in S$. Moreover,
students' preferences and schools' original and adjusted priority orders are
given by 
\begin{eqnarray*}
P_{i_{1}} &:&s_{2},s_{1},s_{3},s_{4}, \\
P_{i_{2}} &:&s_{1},s_{3},s_{2},s_{4}, \\
P_{i_{3}} &:&s_{1},s_{2},s_{3},s_{4}, \\
P_{i_{4}} &:&s_{1},s_{2},s_{3},s_{4},
\end{eqnarray*}%
\begin{gather*}
\succ _{s_{1}}^{O}=\succ _{s_{1}}^{A}:i_{1},i_{2},i_{3},i_{4},\text{ } \\
\succ _{s_{2}}^{O}:i_{1},i_{3},i_{2},i_{4},\text{ }\succ
_{s_{2}}^{A}:i_{2},i_{3},i_{1},i_{4}, \\
\succ _{s_{3}}^{O}:i_{1},i_{4},i_{3},i_{2},\text{ }\succ
_{s_{3}}^{A}:i_{1},i_{4},i_{2},i_{3},\text{ and} \\
\succ _{s_{4}}^{O}=\succ _{s_{4}}^{A}:i_{1},i_{2},i_{3},i_{4}.
\end{gather*}%
Here, for example, they mean $%
s_{2}P_{i_{1}}s_{1}P_{i_{1}}s_{3}P_{i_{1}}s_{4} $ and $i_{1}\succ
_{s_{1}}^{O}i_{2}\succ _{s_{1}}^{O}i_{3}\succ _{s_{1}}^{O}i_{4},$ and so on.

Then, 
\begin{gather*}
(\succ _{s_{1}}^{O}\cap \succ _{s_{1}}^{A})=\succ _{s_{1}}^{O}=\succ
_{s_{1}}^{A}, \\
(\succ _{s_{2}}^{O}\cap \succ _{s_{2}}^{A})=\left\{ \left. \left(
i_{x},i_{4}\right) \right\vert \text{ }x=1,2,3\right\} , \\
(\succ _{s_{3}}^{O}\cap \succ _{s_{3}}^{A})=\left\{ \left. \left(
i_{1},i_{x}\right) \right\vert \text{ }x=2,3,4\right\} \cup \left\{ \left.
\left( i_{4},i_{y}\right) \right\vert \text{ }y=2,3\right\} , \\
(\succ _{s_{4}}^{O}\cap \succ _{s_{4}}^{A})=\succ _{s_{4}}^{O}=\succ
_{s_{4}}^{A}.
\end{gather*}%
Thus, there are three weakly stable matchings $\mu _{1},\mu _{2}$ and $\mu
_{3},$ under $(\succ ^{O},\succ ^{A})$ such that%
\begin{eqnarray*}
\mu _{1}\left( i_{1}\right) &=&s_{1},\mu _{1}\left( i_{2}\right) =s_{2},\mu
_{1}\left( i_{3}\right) =s_{4},\mu _{1}\left( i_{4}\right) =s_{3}, \\
\mu _{2}\left( i_{1}\right) &=&s_{2},\mu _{2}\left( i_{2}\right) =s_{1},\mu
_{2}\left( i_{3}\right) =s_{4},\mu _{2}\left( i_{4}\right) =s_{3}, \\
\mu _{3}\left( i_{1}\right) &=&s_{1},\mu _{3}\left( i_{2}\right) =s_{4},\mu
_{3}\left( i_{3}\right) =s_{2},\mu _{3}\left( i_{4}\right) =s_{3}.
\end{eqnarray*}

Among them, we note that $\mu _{2}$ is an $I^{\prime }$-optimal weakly
stable matching under $(\succ ^{O},\succ ^{A})$ whenever $i_{1}\in I^{\prime
}$ or $i_{2}\in I^{\prime }$. Moreover, $\mu _{3}$ is an $\left\{
i_{3},i_{4}\right\} $-optimal weakly stable matching under $(\succ
^{O},\succ ^{A})$.

To illustrate Proposition 2, consider $I^{\prime }=\left\{
i_{3},i_{4}\right\} $. Any matching that Pareto dominates $\mu _{3}$ for $%
\left\{ i_{3},i_{4}\right\} $ must assign $s_{1}$ to either $i_{3}$ or $%
i_{4} $. In either case, one of $i_{1}$ and $i_{2}$ has justified envy
toward the student assigned to $s_{1}$ at that matching under both $\succ
^{O}$ and $\succ ^{A}$. Therefore, as stated in Proposition 2, if $\mu _{3}$
is Pareto dominated for $\left\{ i_{3},i_{4}\right\} $ by a matching, then
one of $i_{1}$ and $i_{2}$ has justified envy toward one of $i_{3}$ or $%
i_{4} $ at that matching under both $\succ ^{O}$ and $\succ ^{A}$.

For example, let $\mu _{4}$ be such that 
\begin{equation*}
\mu _{4}\left( i_{1}\right) =s_{2},\mu _{4}\left( i_{2}\right) =s_{4},\mu
_{4}\left( i_{3}\right) =s_{1},\mu _{4}\left( i_{4}\right) =s_{3},
\end{equation*}%
which Pareto dominates $\mu _{3}$. However, since $i_{2}\succ
_{s_{1}}^{O}i_{3}$ and $i_{2}\succ _{s_{1}}^{A}i_{3}$, $i_{2}$ has justified
envy toward $i_{3}$ at $\mu _{4}$ under both $\succ ^{O}$ and $\succ ^{A}$.

\subsection{EADA algorithm}

In this section, we introduce a method to derive an SOWSM\textbf{\ }under $%
(\succ ^{O},\succ ^{A})$. By Proposition 1, it suffices to obtain an SOSM
under $\left( \succ ^{O}\cap \succ ^{A}\right) $.

First, we introduce the following result.

\begin{lemma}
For any two linear priority orders $\succ _{s}^{O}$ and $\succ _{s}^{A},$
their intersection $\left( \succ _{s}^{O}\cap \succ _{s}^{A}\right) $ is a
(strict) partial order (transitive and asymmetric).\footnote{%
Note that this is satisfied even if we consider three or more partial
orders; that is, if $\succ _{s}^{1},$ $\succ _{s}^{2},$ $\cdots ,$ and $%
\succ _{s}^{n}$ are partial orders, then $\left( \succ _{s}^{1}\cap \succ
_{s}^{2}\cap \cdots \cap \succ _{s}^{n}\right) $ is also a partial order.
For the formal proof, see the previous version of this manuscript (Kitahara
and Okumura (2023)).}
\end{lemma}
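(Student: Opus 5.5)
The plan is to check the two defining properties of a strict partial order, asymmetry and transitivity, directly from the definition of $\left( \succ _{s}^{O}\cap \succ _{s}^{A}\right) $. The only facts used about $\succ _{s}^{O}$ and $\succ _{s}^{A}$ are that each is transitive and asymmetric; completeness plays no role. Hence the same argument applies verbatim to any finite family of partial orders, as noted in the footnote.

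For asymmetry, suppose $\left( i,j\right) \in \left( \succ _{s}^{O}\cap \succ _{s}^{A}\right) $. By definition, $i\succ _{s}^{O}j$. If also $\left( j,i\right) \in \left( \succ _{s}^{O}\cap \succ _{s}^{A}\right) $, then $j\succ _{s}^{O}i$, which contradicts the asymmetry of $\succ _{s}^{O}$. Only one of the two orders is needed for this step.

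For transitivity, suppose $\left( i,j\right) $ and $\left( j,k\right) $ both belong to $\left( \succ _{s}^{O}\cap \succ _{s}^{A}\right) $. Then $i\succ _{s}^{O}j$ and $j\succ _{s}^{O}k$, so $i\succ _{s}^{O}k$ by transitivity of $\succ _{s}^{O}$. In the same way, $i\succ _{s}^{A}j$ and $j\succ _{s}^{A}k$ give $i\succ _{s}^{A}k$. Hence $\left( i,k\right) \in \left( \succ _{s}^{O}\cap \succ _{s}^{A}\right) $.

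There is no real obstacle here. The only point worth stating explicitly is that the intersection is generally \emph{not} complete: for example, at $s_{2}$ in Example 1 the pair $i_{1},i_{2}$ is incomparable. This is why the conclusion is a partial order rather than a linear order, and why the subsequent EADA construction has to work with partial priority orders.
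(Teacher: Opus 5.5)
Your proof is correct and follows the same approach as the paper. Transitivity is checked separately in each order exactly as the paper does, and you additionally spell out the asymmetry step that the paper simply calls immediate.
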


\textbf{Proof. }The asymmetry of $\left( \succ _{s}^{O}\cap \succ
_{s}^{A}\right) $ is immediate. We show that $\left( \succ _{s}^{O}\cap
\succ _{s}^{A}\right) $ is transitive. Suppose $\left( i,j\right) \in \left(
\succ _{s}^{O}\cap \succ _{s}^{A}\right) $ and $\left( j,k\right) \in \left(
\succ _{s}^{O}\cap \succ _{s}^{A}\right) $. Then, $i\succ _{s}^{O}j$, $%
j\succ _{s}^{O}k$, $i\succ _{s}^{A}j$ and $j\succ _{s}^{A}k$. Since $\succ
_{s}^{O}$ and $\succ _{s}^{A}$ are transitive, $i\succ _{s}^{O}k$ and $%
i\succ _{s}^{A}k$. Therefore, $\left( i,k\right) \in \left( \succ
_{s}^{O}\cap \succ _{s}^{A}\right) $. \textbf{Q.E.D.}\newline

Although $\left( \succ _{s}^{O}\cap \succ _{s}^{A}\right) $ is transitive
and asymmetric, it may fail to be complete and hence may not be a linear
order. Therefore, we consider a school choice problem where every school $s$
has only one partial priority order denoted by $\succ _{s}$. We introduce
the (simplified) EADA algorithm, which is originally proposed by Kesten
(2010) and subsequently modified by Tang and Yu (2014), as a method to
derive an SOSM for $\succ $. The following description is almost the same as
that introduced by Tang and Yu (2014, Subsection 4.2), but they only
consider weak priority orders. Kitahara and Okumura (2026b) show that this
algorithm attains an SOSM under $\succ $ even when $\succ $ is not a
weak-order profile but a partial-order profile.

In the substeps of the algorithm, we use the student-proposing deferred
acceptance (hereafter SPDA) algorithm of Gale and Shapley (1962). However,
to apply SPDA, the schools' priority orders must be linear. Therefore, we
introduce the notion of a linear order extension of a partial priority order.

A linear order $\succ _{s}^{\ast }$ is said to be a \textbf{linear order
extension} of $\succ _{s}$ if $i\succ _{s}j$ implies $i\succ _{s}^{\ast }j$.
If $\succ _{s}$ is a partial order, then there exists at least one linear
order extension of $\succ _{s}$. See, for example, Kitahara and Okumura
(2021) on this fact. Moreover, let $\succ ^{\ast }=\left( \succ _{s}^{\ast
}\right) _{s\in S}$ be a \textbf{linear order extension profile} of a
priority profile $\succ $ if $\succ _{s}^{\ast }$ is a linear order
extension of $\succ _{s}$ for all $s\in S$.

We provide an informal description of the EADA algorithm; a formal
description is given in Appendix A.2. Given a priority profile $\succ $,
first choose a linear-order extension $\succ ^{\ast }$ and run the SPDA
algorithm under $\succ ^{\ast }$. Fix the assignments of students assigned
to underdemanded schools, as well as those who remain unmatched, and remove
these students and schools. To preserve the priority relations specified by $%
\succ $, whenever a removed student $i$ prefers a remaining school $s$ to
her fixed assignment, remove $s$ from the preference list of every remaining
student $j$ such that $i\succ _{s}j$. Run SPDA again on the resulting
subproblem and repeat this procedure until no student remains.

Since the resulting matching depends on both the priority profile $\succ $
and the selected linear-order extension profile $\succ ^{\ast }$, we denote
it by $EA\left( \succ ,\succ ^{\ast }\right) $.

The following result is due to Kitahara and Okumura (2026b).

\begin{remark}
(Kitahara and Okumura 2026, Theorem 1) If $\succ $ is a partial order
profile and $\succ ^{\ast }$ is a linear order extension profile of $\succ $%
, then $EA\left( \succ ,\succ ^{\ast }\right) $ is an SOSM for $\succ $.
\end{remark}

We immediately have the following result.

\begin{corollary}
If $\succ ^{\ast }$ is a linear order extension profile of $\left( \succ
^{O}\cap \succ ^{A}\right) $, then $EA\left( \left( \succ ^{O}\cap \succ
^{A}\right) ,\succ ^{\ast }\right) $ is an SOWSM under $\left( \succ
^{O},\succ ^{A}\right) $.
\end{corollary}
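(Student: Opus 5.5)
The corollary follows by combining Lemma 1, the Remark, and Proposition 1. The plan is to show that the SOWSMs under $\left( \succ ^{O},\succ ^{A}\right) $ are exactly the SOSMs under $\left( \succ ^{O}\cap \succ ^{A}\right) $. The Remark then applies once we check that its hypotheses hold for the intersection profile.

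First, by Lemma 1, each $\left( \succ _{s}^{O}\cap \succ _{s}^{A}\right) $ is a strict partial order. Hence $\left( \succ ^{O}\cap \succ ^{A}\right) $ is a partial order profile, and by hypothesis $\succ ^{\ast }$ is a linear order extension profile of it. The Remark therefore applies and gives that $\mu :=EA\left( \left( \succ ^{O}\cap \succ ^{A}\right) ,\succ ^{\ast }\right) $ is an SOSM under $\left( \succ ^{O}\cap \succ ^{A}\right) $. That is, $\mu $ is stable under the intersection profile, and no matching stable under the intersection profile Pareto dominates $\mu $.

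Second, we translate both properties via Proposition 1. Individual rationality and non-wastefulness do not depend on priorities. Fairness under $\left( \succ ^{O}\cap \succ ^{A}\right) $ is equivalent to A-fairness under $(\succ ^{O},\succ ^{A})$ by Proposition 1. So a matching is stable under $\left( \succ ^{O}\cap \succ ^{A}\right) $ if and only if it is weakly stable under $(\succ ^{O},\succ ^{A})$; the two sets of matchings coincide. Consequently $\mu $ is weakly stable. Moreover, if some weakly stable matching $\mu ^{\prime }$ Pareto dominated $\mu $, then $\mu ^{\prime }$ would be stable under the intersection profile. This contradicts the SOSM property of $\mu $. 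Hence $\mu $ is an SOWSM.

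There is no real obstacle here. The only point that needs care is checking that the Remark's hypotheses hold, namely that the intersection profile is a partial order profile, which is exactly what Lemma 1 provides. Beyond that, the argument only uses the fact that the two stability notions define the same set of matchings.
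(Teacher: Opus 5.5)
Your proposal is correct and follows essentially the paper's own argument. The paper states this corollary as immediate from Remark 1, after using Lemma 1 to show that $\left( \succ ^{O}\cap \succ ^{A}\right)$ is a partial order profile and noting via Proposition 1 that weak stability coincides with stability under the intersection. You make those same steps explicit.
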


Note that both $\succ ^{O}$ and $\succ ^{A}$ are linear order extension
profiles of $\left( \succ ^{O}\cap \succ ^{A}\right) $.

We further have the following result.

\begin{proposition}
Let $\succ ^{\ast }$ be a linear order extension profile of $\left( \succ
^{O}\cap \succ ^{A}\right) ,$ $\mu ^{\ast }=EA\left( \left( \succ ^{O}\cap
\succ ^{A}\right) ,\succ ^{\ast }\right) $ and $\mu ^{\ast \ast }$ be the
SOSM under $\succ ^{\ast }$. $\mu ^{\ast }$ weakly Pareto dominates $\mu
^{\ast \ast }$. Moreover, if a student $i$ has justified envy toward $j$ at $%
\mu ^{\ast }$ under $\succ ^{\ast }$, then either $j\succ _{s}^{O}i$ or $%
j\succ _{s}^{A}i$ , where $s=\mu ^{\ast }\left( j\right) $, and $\mu ^{\ast
} $ Pareto dominates $\mu ^{\ast \ast }$ for\textbf{\ }$I$.
\end{proposition}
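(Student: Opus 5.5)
The plan is to handle the two claims separately. The first claim is a monotonicity property of the EADA rounds. The second is a short consequence of Proposition 1 and the first claim.

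\textbf{Step 1: setup.} Since every order in $\succ ^{\ast }$ is linear, the SOSM $\mu ^{\ast \ast }$ under $\succ ^{\ast }$ is unique and equals the SPDA outcome under $\succ ^{\ast }$. This is exactly the matching produced in the first round of the algorithm defining $\mu ^{\ast }=EA\left( \left( \succ ^{O}\cap \succ ^{A}\right) ,\succ ^{\ast }\right) $.

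\textbf{Step 2: students never get worse across rounds.} I will prove by induction over rounds that each student's assignment in round $t+1$ is weakly preferred to her assignment in round $t$. Students fixed at the end of round $t$ keep their round-$t$ assignment, so only the remaining students need attention.

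Let $\nu _{t}$ be the round-$t$ SPDA matching restricted to the remaining students and schools. It suffices to show that $\nu _{t}$ is individually rational and stable under $\succ ^{\ast }$ in the round-$(t+1)$ subproblem, with the truncated preference lists. Student-optimality of SPDA in that subproblem then gives every remaining student a weakly better school than under $\nu _{t}$.

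Fairness and non-wastefulness carry over because the subproblem only deletes agents and acceptable schools. The delicate point, which I expect to be the main obstacle, is individual rationality with respect to the truncated lists. I must show that no remaining student $j$ is assigned under $\nu _{t}$ to a school $s$ that was deleted from her list. Such a deletion happens only when some fixed student $i$ with $i\left( \succ _{s}^{O}\cap \succ _{s}^{A}\right) j$ prefers $s$ to her fixed assignment.

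\begin{itemize}
\item Suppose the deletion occurred in the current step. Then $i$ holds her round-$t$ school while preferring $s$, and $j$ holds $s$ in round $t$. Since $\succ ^{\ast }$ extends the intersection, $i\succ _{s}^{\ast }j$, so $i$ has justified envy under $\succ ^{\ast }$ in the round-$t$ SPDA. This contradicts the stability of SPDA in that round.
\item Suppose instead the deletion occurred in an earlier round. Then $s$ was already absent from $j$'s list in round $t$, so $j$ could not be matched to $s$ there.
\end{itemize}

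Chaining the rounds gives $\mu ^{\ast }\left( i\right) R_{i}\mu ^{\ast \ast }\left( i\right) $ for every $i$, so $\mu ^{\ast }$ weakly Pareto dominates $\mu ^{\ast \ast }$.

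\textbf{Step 3: the envy claim.} By Corollary 1, $\mu ^{\ast }$ is weakly stable, so by Proposition 1 it is fair under $\left( \succ ^{O}\cap \succ ^{A}\right) $. Suppose $i$ has justified envy toward $j$ at $\mu ^{\ast }$ under $\succ ^{\ast }$, with $s=\mu ^{\ast }\left( j\right) $. Then $\left( i,j\right) \notin \left( \succ _{s}^{O}\cap \succ _{s}^{A}\right) $. Because $\succ _{s}^{O}$ and $\succ _{s}^{A}$ are linear and $i\neq j$, it follows that $j\succ _{s}^{O}i$ or $j\succ _{s}^{A}i$.

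\textbf{Step 4: strict domination.} The matching $\mu ^{\ast \ast }$ is fair under $\succ ^{\ast }$, whereas $\mu ^{\ast }$ is not, so $\mu ^{\ast }\neq \mu ^{\ast \ast }$. Hence some student is assigned differently under the two matchings. Combined with the weak domination from Step 2 and strictness of preferences, that student strictly prefers $\mu ^{\ast }$. Therefore $\mu ^{\ast }$ Pareto dominates $\mu ^{\ast \ast }$ for $I$.
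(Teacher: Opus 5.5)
Your Steps 1, 3 and 4 are the paper's argument in Appendix A.3. Round 1 of EADA is SPDA under $\succ^{\ast}$, so $\hat{\mu}^{1}=\mu^{\ast\ast}$. Fairness of $\mu^{\ast}$ under $(\succ^{O}\cap\succ^{A})$, plus linearity of $\succ_{s}^{O}$ and $\succ_{s}^{A}$, gives $j\succ_{s}^{O}i$ or $j\succ_{s}^{A}i$. Finally, $\mu^{\ast}\neq\mu^{\ast\ast}$ plus weak domination gives strict domination. The only difference is Step 2. The paper does not prove round-by-round monotonicity but imports it as Lemma 5 (from Kitahara and Okumura 2026b). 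You re-derive it instead, and the first bullet of your individual-rationality argument has a gap.

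The deletion of $s$ from $j$'s list at the start of round $t+1$ is triggered by some $i\in E^{t}$ with $(i,j)\in(\succ_{s}^{O}\cap\succ_{s}^{A})$. That $i$ prefers $s$ to $DA(G^{t})(i)$ under her \emph{true} preference $P_{i}$, as the definition of $Z_{j}^{t+1}$ requires. Stability of the round-$t$ SPDA, however, only excludes justified envy with respect to the truncated lists $P^{t}$. If $s$ had already been deleted from $i$'s own list in an earlier round, then $\emptyset P_{i}^{t}s$. In that case $i$ does not envy $j$ in the round-$t$ subproblem, and your contradiction does not arise.

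This case is in fact impossible, but ruling it out needs transitivity of $\succ_{s}^{O}\cap\succ_{s}^{A}$ (Lemma 1), which your argument never uses. Suppose $s$ was deleted from $i$'s list because of some $i'$ eliminated in a round $t'<t$, with $(i',i)\in(\succ_{s}^{O}\cap\succ_{s}^{A})$ and $sP_{i'}DA(G^{t'})(i')$. Then $(i',j)\in(\succ_{s}^{O}\cap\succ_{s}^{A})$, so $s$ is off $j$'s list from round $t'+1\leq t$ on. This contradicts $j$ holding $s$ in round $t$.

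In the remaining case $s$ is still acceptable under $P_{i}^{t}$. The truncations preserve the relative order of undeleted schools and $\emptyset$, so $sP_{i}DA(G^{t})(i)$ yields $sP_{i}^{t}DA(G^{t})(i)$, and your justified-envy contradiction then goes through.

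The same order-preservation fact is needed at one more point, which you leave implicit. Student-optimality of the round-$(t+1)$ SPDA only gives $DA(G^{t+1})(j)\,R_{j}^{t+1}\,\nu_{t}(j)$. Because individual rationality makes $\nu_{t}(j)$ an undeleted acceptable school, this comparison coincides with the one under $P_{j}$. With these two repairs, or simply by citing Lemma 5 as the paper does, your proof is complete.
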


We introduce the proof of Proposition 3 in Appendix A.3.

To illustrate Proposition 3, we revisit Example 1. In this example, 
\begin{equation*}
EA\left( \left( \succ ^{O}\cap \succ ^{A}\right) ,\succ ^{A}\right) =\mu
_{2},
\end{equation*}%
whereas the SOSM under $\succ ^{A}$ is $\mu _{1}$. Then, $\mu _{2}$ Pareto
dominates $\mu _{1}$ for\textbf{\ }$I$. Moreover, $i_{3}$ has justified envy
toward $i_{1}$ at $\mu _{2}$ under $\succ ^{A}$, because $%
s_{2}P_{i_{3}}s_{4}=\mu _{2}\left( i_{3}\right) $ and $i_{3}\succ
_{s_{2}}^{A}i_{1}\in \mu _{2}\left( s_{2}\right) $. Moreover, $i_{1}\succ
_{s_{2}}^{O}i_{3}$, as stated in Proposition 3.

However, such justified envy is justified only according to the adjusted
priority order: if $i$ has justified envy toward $j$ under $\succ ^{A}$,
then $j\succ _{\mu ^{\ast }\left( j\right) }^{O}i$. Thus, the mechanism
achieves a Pareto improvement at the cost of introducing only those
instances of justified envy under $\succ ^{\ast }$ that are supported by the
priority relation under either $\succ ^{O}$ or $\succ ^{A}$.

We can derive an SOWSM under $\left( \succ ^{O},\succ ^{A}\right) $ by using
an algorithm other than the EADA algorithm. Kitahara and Okumura (2021) show
that if $\succ _{s}$ is a partial order for all $s\in S$, a modified version
of the stable improvement cycle algorithm \`{a} la Erdil and Ergin (2008)
also attains an SOSM under $\succ $. By Proposition 1, an SOWSM under $%
\left( \succ ^{O},\succ ^{A}\right) $ can also be obtained by utilizing the
algorithm instead of the EADA algorithm. Moreover, if the algorithm is
started from the SOSM under a linear order profile $\succ ^{\ast }$ of $%
\left( \succ ^{O}\cap \succ ^{A}\right) $, then its outcome satisfies the
same property as that stated in Proposition 3.

However, we focus on the EADA algorithm because, as shown in the next
section, the mechanism based on the EADA algorithm satisfies additional
desirable properties.

\section{Improvement and Responsive Mechanism}

Given $\left( \succ ^{O},\succ ^{A}\right) $, student $i\in I$ is said to be
an \textbf{improved student} if there exist $j\in I$ and $s\in S$ such that $%
j\succ _{s}^{O}i$ and $i\succ _{s}^{A}j$. Moreover, for $\left( \succ
^{O},\succ ^{A}\right) ,$ let $I\left( \succ ^{O},\succ ^{A}\right)
\subseteq I$ be the set of all improved students. Since $\succ _{s}^{O}\neq
\succ _{s}^{A}$ for some $s$ and they are linear orders, $I\left( \succ
^{O},\succ ^{A}\right) \neq \emptyset $.

We have the following result.

\begin{theorem}
For any $I^{\prime }\supseteq I\left( \succ ^{O},\succ ^{A}\right) $, $%
EA\left( \left( \succ ^{O}\cap \succ ^{A}\right) ,\succ ^{A}\right) $ is an $%
I^{\prime }$\textbf{-}optimally weakly stable matching under $\left( \succ
^{O},\succ ^{A}\right) $.
\end{theorem}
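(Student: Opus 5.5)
The plan is to argue by contradiction, reducing the theorem to the Pareto-optimality of EADA from Corollary 1 and Remark 1. Let $\mu^{\ast}=EA((\succ^{O}\cap\succ^{A}),\succ^{A})$. Suppose $\nu$ is weakly stable under $(\succ^{O},\succ^{A})$ and Pareto dominates $\mu^{\ast}$ for some $I'\supseteq I(\succ^{O},\succ^{A})$. Since $I'$ only grows the harder the statement becomes, it suffices to treat $I'=I(\succ^{O},\succ^{A})$, or more precisely any $I'$ containing it. Then every improved student is weakly better off at $\nu$, and some student in $I'$ is strictly better off. Students outside $I'$ are non-improved and may be worse off.

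\textbf{Step 1: non-improved students.} The first step is a simple but crucial observation about a non-improved student $k$. If $k\succ^{A}_{s}j$, then $k\succ^{O}_{s}j$, since otherwise $j\succ^{O}_{s}k$ and $k\succ^{A}_{s}j$ would make $k$ improved. Hence $k\succ^{A}_{s}j$ implies $(k,j)\in(\succ^{O}_{s}\cap\succ^{A}_{s})$. Two consequences follow. At any weakly stable matching, a non-improved student never has justified envy under $\succ^{A}$. Also, in every SPDA subrun of the EADA algorithm under $\succ^{A}$, a rejection of $j$ caused by a non-improved $k$ is a rejection "forced" by the intersection profile. This is exactly why the extension $\succ^{A}$ (rather than an arbitrary extension) is used: the only comparisons in $\succ^{A}$ not backed by $\succ^{O}\cap\succ^{A}$ are those in which an improved student is ranked above someone.

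\textbf{Step 2: induction over EADA rounds.} Let $I_{1},I_{2},\dots$ be the sets of students whose assignments are fixed at successive rounds, together with the corresponding removed schools. I would prove by induction on $t$ that $\nu(i)=\mu^{\ast}(i)$ for every $i\in I_{t}$.

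For the inductive step, consider round $t$. Its SPDA run under $\succ^{A}$, on the reduced problem, uses preference lists from which schools were deleted because of $\succ^{O}\cap\succ^{A}$-priorities of already-fixed students. By the inductive hypothesis and weak stability of $\nu$, no remaining student can hold such a deleted school at $\nu$.

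Next, take a student $i\in I_{t}$ fixed at an underdemanded school, or left unmatched. Suppose she held at $\nu$ a school $s$ with $sP_{i}\mu^{\ast}(i)$. Then $s$ rejected her during the round-$t$ SPDA run. I would trace the rejection chain as in the standard proof that underdemanded students cannot gain in any stable matching (Kesten 2010; Tang and Yu 2014). The claim is that the chain produces a student $k$ who is worse off at $\nu$ than in that subrun and whose rejection was caused by a non-improved student, or else by an intersection comparison. By Step 1 this yields $\succ^{O}\cap\succ^{A}$-envy at $\nu$, which contradicts weak stability (Proposition 1). The case where students are worse off is handled as follows. Those students must be non-improved, because improved students are weakly better off, and then Step 1 turns their envy into intersection envy.

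\textbf{Step 3 and the main obstacle.} Once Step 2 shows that $\nu$ coincides with $\mu^{\ast}$ on all students, the contradiction with strict improvement for some member of $I'$ is immediate. The main obstacle is the chain-tracing in Step 2. Dominance is only assumed on $I'$, so some students may be strictly worse off at $\nu$. The usual cycle argument, in which the last proposer at an improving school is displaced, must therefore be adapted so that every displaced student the chain lands on is either improved (hence weakly better off, giving a contradiction with the chain direction) or non-improved (hence her priority is an intersection priority by Step 1). I would first try to choose the earliest rejection in the round-$t$ SPDA run among the schools that students of $I_{t}$ strictly prefer at $\nu$, and show that the rejecting student must be non-improved or must have an intersection priority over the rejected one.
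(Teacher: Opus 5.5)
\textbf{There is a genuine gap.} The core of Step 2, showing that no student fixed in round $t$ is strictly better off at $\nu$, is never carried out. You label it ``the main obstacle'' and only say what you ``would first try''.

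\emph{The suggested route fails.} Take the earliest round-$t$ rejection of some $k\in I_t$ by $s=\nu(k)$. This gives a tentative holder $m\notin\nu(s)$ with $m\succ^{A}_{s}k$.
\begin{itemize}
\item If $\nu(m)P_m s$, then $m$ was rejected by $\nu(m)$ earlier. But $m$ need not lie in $I_t$, so minimality over $I_t$ gives nothing.
\item If $sP_m\nu(m)$ and $k\succ^{O}_{s}m$, then $m$ is improved. Yet $\nu(m)R_m\mu^{\ast}(m)$ is fully compatible with $sP_m\nu(m)$, since $m$ may be displaced from $s$ later in the round.
\end{itemize}
More fundamentally, rejection reasoning alone cannot work. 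In general, students whose EADA assignment improves after round $1$ were rejected in round $1$ by their $\mu^{\ast}$-schools, even though $\mu^{\ast}$ is itself weakly stable. The argument must use that $\hat{\mu}^{t}(i)$ is underdemanded for $i\in I_t$, where $\hat{\mu}^{t}$ is the matching after round $t$. Your sketch never does. The chain description is also inconsistent: a student rejected by $s$ in the subrun who holds $s$ at $\nu$ is better off at $\nu$, not worse.

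\emph{The missing idea is a seat-counting argument.}
\begin{itemize}
\item Let $S^{\ast}$ be the schools held at $\nu$ by students strictly better off than at $\hat{\mu}^{t}$.
\item For $i\in I_t$ with $\nu(i)P_i\hat{\mu}^{t}(i)$, underdemandedness plus the inductive hypothesis give $\hat{\mu}^{t}(i)\notin S^{\ast}$, while $\nu(i)\in S^{\ast}$.
\item Every school in $S^{\ast}$ is full at $\hat{\mu}^{t}$, by non-wastefulness of $\hat{\mu}^{t}$ (which is stable under $\succ^{O}\cap\succ^{A}$).
\item Counting seats, some $i_d$ has $\hat{\mu}^{t}(i_d)=s\in S^{\ast}$ and $\nu(i_d)\notin S^{\ast}$. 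So $i_d$ is strictly worse off at $\nu$, hence $i_d\notin I'$, hence $i_d$ is non-improved.
\item Some $j_d$ holds $s$ at $\nu$ with $sP_{j_d}\hat{\mu}^{t}(j_d)$.
\item Your Step 1, applied to $i_d$ (who envies $j_d$ at $\nu$), forces $j_d\succ^{A}_{s}i_d$.
\item The inductive hypothesis shows $j_d$ was active in round $t$, with $s$ still present and on her round-$t$ list. Otherwise some previously fixed student with an intersection priority over $j_d$ at $s$ would be strictly better off at $\nu$.
\item Hence the round-$t$ SPDA under $\succ^{A}$ cannot end with $i_d$ at $s$ while $j_d$, who prefers $s$, sits elsewhere. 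Contradiction.
\end{itemize}
So Step 1 is the right ingredient. It must be applied to a worse-off non-improved student found by this counting, not to a rejecting student in a chain. Your sentence ``Step 1 turns their envy into intersection envy'' presupposes exactly the pair $(i_d,j_d)$ and the round-$t$ competition between them, which are what need to be established.

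\emph{Two smaller points.}
\begin{itemize}
\item Weaken your inductive invariant from $\nu(i)=\mu^{\ast}(i)$ to $\mu^{\ast}(i)R_i\nu(i)$. Non-improved fixed students may be strictly worse off at $\nu$, and the weak version is all your deletion argument and Step 3 use.
\item $I'$-optimality is not monotone in $I'$, so ``the larger $I'$, the harder'' is not correct. This is harmless, since you treat general $I'$.
\end{itemize}
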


The proof is provided in Appendix A.4.

The theorem requires $I^{\prime }$ to include all improved students. It
shows, however, that once this condition is satisfied, no weakly stable
matching can Pareto improve upon $EA\left( \left( \succ ^{O}\cap \succ
^{A}\right) ,\succ ^{A}\right) $ for the group $I^{\prime }$, no matter
which additional students are included.

To illustrate Theorem 1, we revisit Example 1. As shown above,%
\begin{equation*}
EA\left( \left( \succ ^{O}\cap \succ ^{A}\right) ,\succ ^{A}\right) =\mu
_{2}.
\end{equation*}%
Moreover, $I\left( \succ ^{O},\succ ^{A}\right) =\left\{ i_{2},i_{3}\right\}
,$ because $i_{1}\succ _{s_{2}}^{O}i_{2}$, $i_{2}\succ _{s_{2}}^{A}i_{1}$, $%
i_{1}\succ _{s_{2}}^{O}i_{3}$, $i_{3}\succ _{s_{2}}^{A}i_{1},$ $i_{4}\succ
_{s_{3}}^{O}i_{2},$ and $i_{2}\succ _{s_{3}}^{A}i_{4}$. As noted above, $\mu
_{2}$ is an $I^{\prime }$-optimal weakly stable matching whenever $i_{2}\in
I^{\prime }$. Therefore, $EA\left( \left( \succ ^{O}\cap \succ ^{A}\right)
,\succ ^{A}\right) $ is an $I^{\prime }$-optimal weakly stable matching
under $(\succ ^{O},\succ ^{A})$ for any $I^{\prime }\supseteq I\left( \succ
^{O},\succ ^{A}\right) $, as stated in Theorem 1.

Theorem 1 is stronger than Corollary 1. Since the set of all students $I$
contains $I\left( \succ ^{O},\succ ^{A}\right) $, Theorem 1 implies, as a
special case, that the EADA outcome based on the extension profile $\succ
^{A}$ is an SOWSM. More importantly, the theorem establishes Pareto
undominance with respect to every group $I^{\prime }$ containing all
improved students. Thus, no weakly stable matching can Pareto improve upon
the EADA outcome for the students in $I^{\prime }$, even at the expense of
making students outside $I^{\prime }$ worse off.

Now, we explicitly define mechanisms. Fixing the student preference profile $%
P$ and the capacity vector $q$, a mechanism $\phi :\mathcal{L}^{\left\vert
S\right\vert }\times \mathcal{L}^{\left\vert S\right\vert }\rightarrow 
\mathcal{M}$ is a mapping where $\mathcal{M}$ is the set of all matchings. A
mechanism $\phi $ is said to be \textbf{stable }if for any $\succ ^{O},\succ
^{A}\in \mathcal{L}^{\left\vert S\right\vert },$ $\phi \left( \succ
^{O},\succ ^{A}\right) $ is stable under $\succ ^{A}$. Note that this
stability notion requires stability only with respect to the adjusted
priority profile.

By Proposition 2 and Theorem 1, if $EA\left( \left( \succ ^{O}\cap \succ
^{A}\right) ,\succ ^{A}\right) $ is Pareto dominated for $I^{\prime
}\supseteq I\left( \succ ^{O},\succ ^{A}\right) $ by $\mu ^{\prime },$ then
there is a student $i$ who has justified envy toward $j$ at $\mu ^{\prime }$
under both $\succ ^{O}$ and $\succ ^{A}$. Thus, any further Pareto
improvement for the improved students requires an additional adjustment of
priorities that resolves such justified envy.

We consider notions of responsiveness for improvements. Let $\succ ^{A\prime
}=\left( \succ _{s}^{A\prime }\right) _{s\in S}$ denote another adjusted
priority profile that differs from $\succ ^{A}$. We allow $\succ ^{A\prime
}=\succ ^{O}$. We first introduce the following notion for comparing the
strength of two priority improvements.

\begin{definition}
We say that $\succ ^{A}$ represents a \textbf{stronger improvement} \textbf{%
of} $\succ ^{O}$ \textbf{than} $\succ ^{A\prime }$ if for all $s\in S$ and
all $i,j\in I$, $i\succ _{s}^{O}j$ and $j\succ _{s}^{A\prime }i$ imply $%
j\succ _{s}^{A}i$.
\end{definition}

First, $i\succ _{s}^{O}j$ and $j\succ _{s}^{A\prime }i$ mean that the
priority of $j$ over $i$ is newly introduced under $\succ ^{A\prime }$
relative to $\succ ^{O}$. The condition requires that any such newly
improved priority relation under $\succ ^{A\prime }$ must also be present
under $\succ ^{A}$; that is, $j\succ _{s}^{A}i$. Note that if $\succ
^{A\prime }=\succ ^{O}$, then any $\succ ^{A}$ represents a stronger
improvement of $\succ ^{O}$ than $\succ ^{A\prime }$.

For example, consider an affirmative action policy that favors
minority-group students by adding bonus points to their scores. Let $\succ
^{O}$ be the priority profile based solely on examination scores, and let $%
\succ ^{A}$ and $\succ ^{A\prime }$ be priority profiles based on
examination scores together with such bonus points. Suppose that, at each
school, the bonus points assigned to minority-group students under $\succ
^{A}$ are higher than those under $\succ ^{A\prime }$. Then $\succ ^{A}$
represents a stronger improvement of $\succ ^{O}$ than $\succ ^{A\prime }$.

We obtain the following result.

\begin{lemma}
Suppose that $\succ ^{A}$ represents a stronger improvement of $\succ ^{O}$
than $\succ ^{A\prime }$. Then, 
\begin{equation}
\left( \succ _{s}^{O}\cap \succ _{s}^{A}\right) \subseteq \left( \succ
_{s}^{O}\cap \succ _{s}^{A\prime }\right)  \label{a}
\end{equation}%
for all $s\in S$.
\end{lemma}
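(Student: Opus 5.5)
The plan is a direct element-chasing argument: fix $s\in S$ and a pair $(i,j)\in \left( \succ _{s}^{O}\cap \succ _{s}^{A}\right)$, so that $i\succ _{s}^{O}j$ and $i\succ _{s}^{A}j$. We must show that $(i,j)\in \left( \succ _{s}^{O}\cap \succ _{s}^{A\prime }\right)$. The first component, $i\succ _{s}^{O}j$, holds by assumption, so it only remains to establish $i\succ _{s}^{A\prime }j$.

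For this step I would argue by contradiction, using the completeness of the linear order $\succ _{s}^{A\prime }$. Suppose $i\succ _{s}^{A\prime }j$ fails. Since $\succ _{s}^{A\prime }\in \mathcal{L}$ and $i\neq j$ (asymmetry of $\succ _{s}^{O}$ rules out $i=j$), we get $j\succ _{s}^{A\prime }i$. Together with $i\succ _{s}^{O}j$, the definition of a stronger improvement, applied to this pair at school $s$, yields $j\succ _{s}^{A}i$. This contradicts $i\succ _{s}^{A}j$ by asymmetry of $\succ _{s}^{A}$. Hence $i\succ _{s}^{A\prime }j$, and the inclusion (\ref{a}) follows because $s$ and $(i,j)$ were arbitrary.

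There is no genuine obstacle here. The only point needing care is the appeal to completeness of $\succ _{s}^{A\prime }$: without linearity, failure of $i\succ _{s}^{A\prime }j$ would not imply $j\succ _{s}^{A\prime }i$. This is why the assumption that adjusted profiles lie in $\mathcal{L}^{\left\vert S\right\vert }$ is used. Equivalently, the argument is just the contrapositive of the defining implication of a stronger improvement, restricted to pairs ranked the same way by $\succ _{s}^{O}$.
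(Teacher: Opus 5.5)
Your proof is correct and is the paper's argument: given $i\succ_{s}^{O}j$ and $i\succ_{s}^{A}j$, the stronger-improvement condition rules out $j\succ_{s}^{A\prime}i$, and completeness of the linear order $\succ_{s}^{A\prime}$ then gives $i\succ_{s}^{A\prime}j$. You phrase it as a contradiction rather than a contrapositive, and you add the check that $i\neq j$, but the reasoning is the same.
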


\textbf{Proof. }Let $\left( i,j\right) \in \left( \succ _{s}^{O}\cap \succ
_{s}^{A}\right) $. Then, $i\succ _{s}^{O}j$ and $i\succ _{s}^{A}j$. Then, $%
\lnot \left( j\succ _{s}^{A\prime }i\right) $. Since $\succ _{s}^{A\prime }$
is a linear order, $i\succ _{s}^{A\prime }j$. \textbf{Q.E.D.}\newline

Intuitively, if $\succ ^{A}$ represents a stronger improvement of $\succ
^{O} $ than $\succ ^{A\prime }$, then the adjustment from $\succ ^{O}$ to $%
\succ ^{A}$ reverses at least all priority relations that are reversed under 
$\succ ^{A\prime }$. Hence, any priority relation that is preserved both
under the original priority order and under the more strongly adjusted
priority order $\succ ^{A}$ must also be preserved under $\succ ^{A\prime }$%
. In other words, as the adjustment becomes stronger, fewer of the original
priority relations remain unchanged. Therefore, for every school $s$, (\ref%
{a}) is satisfied.

Now, we introduce two notions on mechanisms of responsiveness for
improvements.

We first introduce a weaker notion. A mechanism $\phi $ is said to be 
\textbf{responsive to improvements} if, whenever $\succ ^{A}$ represents a
stronger improvement of $\succ ^{O}$ than $\succ ^{A\prime }$, the matching $%
\phi \left( \succ ^{O},\succ ^{A}\right) $ is not Pareto dominated by $\phi
\left( \succ ^{O},\succ ^{A\prime }\right) $ with respect to the set of
improved students $I\left( \succ ^{O},\succ ^{A}\right) $. This requires
that, when $\succ ^{A}$ represents a further improvement relative to $\succ
^{A\prime }$, the outcome under $\left( \succ ^{O},\succ ^{A}\right) $ must
not be Pareto dominated by the outcome under $\left( \succ ^{O},\succ
^{A\prime }\right) $ for the students whose priorities are improved under $%
\left( \succ ^{O},\succ ^{A}\right) $.

The following definition is a stronger one. A mechanism $\phi $ is said to
be \textbf{group-robust} \textbf{responsive to improvements} if, whenever $%
\succ ^{A}$ represents a stronger improvement of $\succ ^{O}$ than $\succ
^{A\prime }$, the matching $\phi \left( \succ ^{O},\succ ^{A}\right) $ is
not Pareto dominated by $\phi \left( \succ ^{O},\succ ^{A\prime }\right) $
for any group $I^{\prime }$ such that 
\begin{equation*}
I^{\prime }\supseteq I\left( \succ ^{O},\succ ^{A}\right) .
\end{equation*}

Clearly, group-robust responsiveness to improvements implies responsiveness
to improvements by taking $I^{\prime }=I\left( \succ ^{O},\succ ^{A}\right) $%
.

The significance of this stronger requirement is as follows. Consider again
an affirmative action policy that favors minority-group students by adding
bonus points to their scores. Some minority-group students may have
examination scores that are so far below those of majority students that,
even after receiving the bonus points under $\succ ^{A}$, they do not
overtake any majority student in priority at any school. Such students are
therefore not included in $I\left( \succ ^{O},\succ ^{A}\right) $.
Nevertheless, under our definition of group-robust responsiveness to
improvements, the group $I^{\prime }$ need only contain all students whose
priorities are improved, that is, $I\left( \succ ^{O},\succ ^{A}\right) $.
Since the minority group contains $I\left( \succ ^{O},\succ ^{A}\right) $,
the requirement also applies to the minority group as a whole, including
minority students whose priorities are not improved. In this sense, the
mechanism can be regarded as responsive to the entire minority group. We
explicitly consider such cases in the next section.

The following impossibility result follows from Kojima (2012, Theorem 2) by
considering priority-based affirmative action as a special case of our
framework.

\begin{remark}
There is no mechanism that is both stable and responsive to improvements.
\end{remark}

To clarify this result, we reformulate Kojima's example within our
framework. There are two minority students. Let $\succ ^{O}$ be the original
priority profile and $\succ ^{A}$ be an adjusted priority profile such that
the set of students whose priorities are improved, $I\left( \succ ^{O},\succ
^{A}\right) $ consists precisely of the two minority students. Let $\succ
^{O}=\succ ^{A\prime }$. In Kojima's example, one minority student strictly
prefers her assignment under $\phi \left( \succ ^{O},\succ ^{A\prime
}\right) $ to that under $\phi \left( \succ ^{O},\succ ^{A}\right) $, while
the other minority student receives the same assignment under both profiles.
Thus, $\phi \left( \succ ^{O},\succ ^{A\prime }\right) $ Pareto dominates $%
\phi \left( \succ ^{O},\succ ^{A}\right) $ for $I\left( \succ ^{O},\succ
^{A}\right) $.

A mechanism $\phi $ is said to be a \textbf{student-optimally} \textbf{%
weakly stable} if for any two linear order profiles $\succ ^{O}$ and $\succ
^{A},$ $\phi \left( \succ ^{O},\succ ^{A}\right) $ is an SOWSM under $\left(
\succ ^{O},\succ ^{A}\right) $. Moreover, a mechanism $\phi $ is said to be
an \textbf{improved-group}-\textbf{optimally} \textbf{weakly} \textbf{stable}
if for any two linear order profiles $\succ ^{O}$ and $\succ ^{A}$, $\phi
\left( \succ ^{O},\succ ^{A}\right) $ is an $I\left( \succ ^{O},\succ
^{A}\right) $-optimally weakly stable matching under $\left( \succ
^{O},\succ ^{A}\right) $.

Now, we introduce a mechanism%
\begin{equation*}
\phi ^{\ast }\left( \succ ^{O},\succ ^{A}\right) =EA\left( \left( \succ
^{O}\cap \succ ^{A}\right) ,\succ ^{A}\right) \text{.}
\end{equation*}

We have the following result as a corollary of Theorem 1.

\begin{corollary}
The mechanism $\phi ^{\ast }$ is improved-group-optimally weakly stable and
student-optimally weakly stable.
\end{corollary}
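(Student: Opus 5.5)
The corollary follows from Theorem 1 by choosing two specific groups $I^{\prime }$; no new construction is needed. We fix arbitrary linear order profiles $\succ ^{O}$ and $\succ ^{A}$. By definition, $\phi ^{\ast }\left( \succ ^{O},\succ ^{A}\right) =EA\left( \left( \succ ^{O}\cap \succ ^{A}\right) ,\succ ^{A}\right) $. Since $\succ ^{A}$ is a linear order extension profile of $\left( \succ ^{O}\cap \succ ^{A}\right) $, this EADA outcome is well defined. We then verify the two properties separately.

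For improved-group-optimal weak stability, we apply Theorem 1 with $I^{\prime }=I\left( \succ ^{O},\succ ^{A}\right) $. The hypothesis $I^{\prime }\supseteq I\left( \succ ^{O},\succ ^{A}\right) $ holds trivially. Theorem 1 then says that $\phi ^{\ast }\left( \succ ^{O},\succ ^{A}\right) $ is an $I\left( \succ ^{O},\succ ^{A}\right) $-optimally weakly stable matching under $\left( \succ ^{O},\succ ^{A}\right) $, which is exactly the required property for this pair of profiles.

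For student-optimal weak stability, we apply Theorem 1 with $I^{\prime }=I$, which contains $I\left( \succ ^{O},\succ ^{A}\right) $. Theorem 1 then gives that $\phi ^{\ast }\left( \succ ^{O},\succ ^{A}\right) $ is an $I$-optimally weakly stable matching, i.e., an SOWSM under $\left( \succ ^{O},\succ ^{A}\right) $. Alternatively, this part follows directly from Corollary 1 with $\succ ^{\ast }=\succ ^{A}$, as the paper notes.

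Since the profiles $\succ ^{O}$ and $\succ ^{A}$ were arbitrary, both properties hold for the mechanism $\phi ^{\ast }$. The only point needing care, and it is minor, is confirming that $\succ ^{A}$ is a valid linear order extension of the intersection, so that Theorem 1 and Corollary 1 apply. This holds because $\left( i,j\right) \in \left( \succ _{s}^{O}\cap \succ _{s}^{A}\right) $ implies $i\succ _{s}^{A}j$ by definition.
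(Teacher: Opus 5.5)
Your proposal is correct and follows the paper's own argument: the corollary comes from Theorem 1 by taking $I^{\prime }=I\left( \succ ^{O},\succ ^{A}\right) $ and $I^{\prime }=I$, respectively. Your remark that $\succ ^{A}$ is a linear order extension profile of $\left( \succ ^{O}\cap \succ ^{A}\right) $, and the alternative route to the second part via Corollary 1, match observations the paper itself makes.
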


This result is straightforward from Theorem 1, by taking $I^{\prime
}=I\left( \succ ^{O},\succ ^{A}\right) $ and $I^{\prime }=I$, respectively.

Finally, we show that $\phi ^{\ast }$ is group-robust responsive to
improvements. The intuition behind the following result is simple. If $\succ
^{A}$ represents a stronger improvement of $\succ ^{O}$ than $\succ
^{A\prime }$, then (\ref{a}) holds for all $s\in S$. Thus, any matching that
is weakly stable under $\left( \succ ^{O},\succ ^{A\prime }\right) $ is also
weakly stable under $\left( \succ ^{O},\succ ^{A}\right) $. Theorem 1 then
implies that $\phi ^{\ast }\left( \succ ^{O},\succ ^{A}\right) $ cannot be
Pareto dominated by $\phi ^{\ast }\left( \succ ^{O},\succ ^{A\prime }\right) 
$ with respect to any group containing all students whose priorities are
improved under $\left( \succ ^{O},\succ ^{A}\right) $.

\begin{theorem}
The mechanism $\phi ^{\ast }$ is group-robust responsive to improvements.
\end{theorem}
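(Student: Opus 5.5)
To prove the statement, I will show that $\phi^{\ast}(\succ^{O},\succ^{A'})$ is weakly stable under $(\succ^{O},\succ^{A})$. Theorem 1 will then rule out Pareto dominance for every $I'\supseteq I(\succ^{O},\succ^{A})$. So fix $\succ^{O}$, $\succ^{A}$ and $\succ^{A'}$ such that $\succ^{A}$ represents a stronger improvement of $\succ^{O}$ than $\succ^{A'}$. Write $\mu=\phi^{\ast}(\succ^{O},\succ^{A})$ and $\mu'=\phi^{\ast}(\succ^{O},\succ^{A'})$, and fix any $I'\supseteq I(\succ^{O},\succ^{A})$. The goal is to show that $\mu$ is not Pareto dominated for $I'$ by $\mu'$.

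The first step establishes weak stability of $\mu'$ under $(\succ^{O},\succ^{A'})$. By Remark 1, or equivalently Corollary 1 applied to the pair $(\succ^{O},\succ^{A'})$ with the linear order extension profile $\succ^{A'}$ of $(\succ^{O}\cap\succ^{A'})$, $\mu'$ is an SOSM under $(\succ^{O}\cap\succ^{A'})$. In particular it is individually rational, non-wasteful, and fair under $(\succ^{O}\cap\succ^{A'})$. One caveat: when $\succ^{A'}=\succ^{O}$ the standing assumption $\succ^{O}_s\neq\succ^{A}_s$ for some $s$ fails for this pair. Remark 1 does not need that assumption, since $(\succ^{O}\cap\succ^{O})=\succ^{O}$ is still a partial-order profile, so I will invoke Remark 1 directly rather than Corollary 1.

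The second step transfers fairness using Lemma 2. Since $(\succ^{O}_s\cap\succ^{A}_s)\subseteq(\succ^{O}_s\cap\succ^{A'}_s)$ for every $s$, fairness under the larger relation implies fairness under the smaller one. Concretely, suppose some $i$ had justified envy toward $j$ at $\mu'$ under $(\succ^{O}\cap\succ^{A})$. Then $s=\mu'(j)P_i\mu'(i)$ and $(i,j)\in(\succ^{O}_s\cap\succ^{A}_s)\subseteq(\succ^{O}_s\cap\succ^{A'}_s)$, which contradicts fairness of $\mu'$ under $(\succ^{O}\cap\succ^{A'})$. Individual rationality and non-wastefulness do not depend on priorities, so $\mu'$ is stable under $(\succ^{O}\cap\succ^{A})$. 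By Proposition 1, $\mu'$ is therefore weakly stable under $(\succ^{O},\succ^{A})$.

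The final step applies Theorem 1. It says that $\mu$ is an $I'$-optimally weakly stable matching under $(\succ^{O},\succ^{A})$, so no weakly stable matching under $(\succ^{O},\succ^{A})$ Pareto dominates $\mu$ for $I'$. Since $\mu'$ is such a matching, it does not Pareto dominate $\mu$ for $I'$. As $I'$ was arbitrary, $\phi^{\ast}$ is group-robust responsive to improvements. All the real work sits in Theorem 1; the only points requiring care here are the inclusion direction in Lemma 2 and the degenerate case $\succ^{A'}=\succ^{O}$, both handled above.
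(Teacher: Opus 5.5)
Your proposal is correct and follows essentially the same route as the paper's proof: weak stability under $(\succ^{O},\succ^{A\prime})$ is transferred to weak stability under $(\succ^{O},\succ^{A})$ via Proposition 1 and the inclusion of Lemma 2, and Theorem 1 then rules out Pareto dominance for every $I^{\prime}\supseteq I(\succ^{O},\succ^{A})$. You also make explicit a step the paper leaves implicit: $\phi^{\ast}(\succ^{O},\succ^{A\prime})$ is itself stable under $(\succ^{O}\cap\succ^{A\prime})$ by Remark 1, including in the degenerate case $\succ^{A\prime}=\succ^{O}$.
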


We introduce the formal proof of Theorem 2 in Appendix A.5.

By Remark 2, no stable mechanism can satisfy even responsiveness to
improvements. In contrast, once stability is relaxed to weak stability, the
mechanism $\phi ^{\ast }$ satisfies the stronger requirement of group-robust
responsiveness to improvements.

To illustrate Theorem 2, we reconsider Example 1. We can show this by
reconsidering Example 1. We assume 
\begin{equation}
\succ _{s_{2}}^{A\prime }:i_{3},i_{2},i_{1},i_{4}\text{ and }\succ
_{s_{x}}^{A\prime }=\succ _{s_{x}}^{A}\text{ for all }x=1,3,4.  \label{e}
\end{equation}%
Then, $\succ ^{A}$ represents a stronger improvement of $\succ ^{O}$ than $%
\succ ^{A\prime }$, because $i\succ _{s}^{O}j$ and $j\succ _{s}^{A\prime }i$
imply $j\succ _{s}^{A}i$. Then, $\succ ^{A}$ represents a stronger
improvement of $\succ ^{O}$ than $\succ ^{A\prime },$ because $i_{1}\succ
_{s_{2}}^{O}i_{2}$, $i_{2}\succ _{s_{2}}^{A\prime }i_{1},$ $i_{2}\succ
_{s_{2}}^{A}i_{1},$ $i_{1}\succ _{s_{2}}^{O}i_{3}$, $i_{3}\succ
_{s_{2}}^{A\prime }i_{1}$, and $i_{3}\succ _{s_{2}}^{A}i_{1}$. Note that $%
i_{3}\succ _{s_{2}}^{O}i_{2}$ and $i_{3}\succ _{s_{2}}^{A\prime }i_{2},$ but 
$i_{2}\succ _{s_{2}}^{A}i_{3}$. Then, $\succ ^{A\prime }$ does not represent
a stronger improvement of $\succ ^{O}$ than $\succ ^{A}$.

In this case, 
\begin{equation*}
\phi ^{\ast }\left( \succ ^{O},\succ ^{A\prime }\right) =EA\left( \left(
\succ ^{O}\cap \succ ^{A\prime }\right) ,\succ ^{A\prime }\right) =\mu _{3}.
\end{equation*}%
Since $I\left( \succ ^{O},\succ ^{A}\right) =\left\{ i_{2},i_{3}\right\} $
and 
\begin{equation*}
\mu _{2}\left( i_{2}\right) =s_{1}P_{i_{2}}s_{4}=\mu _{3}\left( i_{2}\right)
,
\end{equation*}%
$\phi ^{\ast }\left( \succ ^{O},\succ ^{A}\right) $ is not Pareto dominated
by $\phi ^{\ast }\left( \succ ^{O},\succ ^{A\prime }\right) $ for any $%
I^{\prime }\supseteq I\left( \succ ^{O},\succ ^{A}\right) $, as stated in
Theorem 2.

\section{Applications}

\subsection{Priority-based Affirmative Action}

We assume that students are divided into minority group $I^{m}$ and majority
group $I^{M}$, where $I^{m}\cap I^{M}=\emptyset $ and $I^{m}\cup I^{M}=I$.
The score of student $i$ at school $s$ is denoted by $\sigma _{i}^{s}\in 
\mathbb{R}_{+}$. We arbitrarily fix $\sigma _{i}^{s}\in \mathbb{R}_{+}$ for
all $s\in S$ and all $i\in I$. In this section, the original priority
profile is determined by $\left( \sigma _{i}^{s}\right) _{i\in I,s\in S}$.
That is, for all $s\in S$, let $\succ _{s}^{O}$ be such that $i\succ
_{s}^{O}j$ if and only if $\sigma _{i}^{s}>\sigma _{j}^{s}$.

Let $b_{i}^{s}\in \mathbb{R}_{+}$ be the bonus point for student $i$ at
school $s$. Moreover, let $b=\left( b_{i}^{s}\right) _{s\in S,i\in I}\in 
\mathbb{R}_{+}^{\left\vert I\right\vert \times \left\vert S\right\vert }$.
The adjusted priority profile is determined by $\left( \sigma
_{i}^{s}+b_{i}^{s}\right) _{i\in I,s\in S}$; that is, for all $s\in S$, let $%
\succ _{s}^{A}$ be such that $i\succ _{s}^{A}j$ if and only if $\sigma
_{i}^{s}+b_{i}^{s}>\sigma _{j}^{s}+b_{j}^{s}$. To avoid a trivial case, we
assume that $\succ ^{O}$ and $\succ ^{A}$ are distinct.

Here, we assume that $\sigma _{i}^{s}\neq \sigma _{j}^{s}$ and $\sigma
_{i}^{s}+b_{i}^{s}\neq \sigma _{j}^{s}+b_{j}^{s}$ for all distinct $i,j\in I$
and all $s\in S$. These assumptions ensure that both $\succ _{s}^{O}$ and $%
\succ _{s}^{A}$ are linear orders.

We have the following result.

\begin{proposition}
Suppose that $b_{i}^{s}=0$ for all $i\in I^{M}$ and all $s\in S$. Then, $%
\phi ^{\ast }\left( \succ ^{O},\succ ^{A}\right) $ is an $I^{m}$-optimally
weakly stable matching.
\end{proposition}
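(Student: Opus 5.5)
The plan is to reduce the statement to Theorem 1: it suffices to show $I^{m}\supseteq I\left( \succ ^{O},\succ ^{A}\right)$. Theorem 1 with $I^{\prime }=I^{m}$ then gives directly that $\phi ^{\ast }\left( \succ ^{O},\succ ^{A}\right) =EA\left( \left( \succ ^{O}\cap \succ ^{A}\right) ,\succ ^{A}\right) $ is an $I^{m}$-optimally weakly stable matching under $\left( \succ ^{O},\succ ^{A}\right)$.

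To establish the inclusion, take any improved student $i\in I\left( \succ ^{O},\succ ^{A}\right)$. By definition there exist $j\in I$ and $s\in S$ with $j\succ _{s}^{O}i$ and $i\succ _{s}^{A}j$. In terms of scores this means
\begin{equation*}
\sigma _{j}^{s}>\sigma _{i}^{s}\quad \text{and}\quad \sigma _{i}^{s}+b_{i}^{s}>\sigma _{j}^{s}+b_{j}^{s}.
\end{equation*}
Subtracting the first inequality from the second yields $b_{i}^{s}>b_{j}^{s}\geq 0$, since all bonus points lie in $\mathbb{R}_{+}$. Hence $b_{i}^{s}>0$. Because $b_{i}^{s}=0$ for every majority student, $i\notin I^{M}$, and therefore $i\in I^{m}$, as $I^{m}\cup I^{M}=I$. (Incidentally, $b_{i}^{s}>b_{j}^{s}$ also shows that $j$ cannot have a strictly larger bonus, but only $b_{i}^{s}>0$ is needed.)

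The nonemptiness of $I\left( \succ ^{O},\succ ^{A}\right)$ and the linearity of $\succ ^{O}$ and $\succ ^{A}$ are guaranteed by the standing assumptions of this section (distinct profiles and no ties), so the hypotheses of Theorem 1 hold. I do not anticipate a real obstacle. The only step needing care is the arithmetic that turns a priority reversal into a strictly positive bonus for the student who overtakes; once that is done, the result follows immediately from Theorem 1.
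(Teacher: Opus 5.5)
Your proof is correct and follows the paper's route. The paper also notes that zero bonuses for majority students give $I^{M}\cap I(\succ^{O},\succ^{A})=\emptyset$, hence $I(\succ^{O},\succ^{A})\subseteq I^{m}$, and then applies Theorem 1 with $I'=I^{m}$; your score arithmetic ($b_{i}^{s}>b_{j}^{s}\geq 0$) just spells out the step the paper states without detail.
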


A sufficient condition for $I^{m}$-optimal weak stability is that $%
b_{i}^{s}=0$ for all $i\in I^{M}$ and all $s\in S$. This implies that $%
I^{M}\cap I\left( \succ ^{O},\succ ^{A}\right) =\emptyset $. Note that there
may exist some minority student whose priority is not improved at any
school; that is, $I\left( \succ ^{O},\succ ^{A}\right) \varsubsetneq I^{m}$
may hold. However, even in that case, since Theorem 1 requires only $I\left(
\succ ^{O},\succ ^{A}\right) \subseteq I^{m}$, the $I^{m}$\textbf{-}optimal
weak stability of $\phi ^{\ast }\left( \succ ^{O},\succ ^{A}\right) $ is
ensured.

By Propositions 2 and 4 and Theorem 1, we immediately have the following
result.

\begin{corollary}
If a matching $\mu ^{\prime }$ Pareto dominates $\phi ^{\ast }\left( \succ
^{O},\succ ^{A}\right) $ for the minority students, then there is a student $%
i$ who has justified envy toward $j$ at $\mu ^{\prime }$ under both $\succ
^{O}$ and $\succ ^{A}$.
\end{corollary}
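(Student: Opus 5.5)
The plan is to combine Proposition 2, Proposition 4 and Theorem 1, so the main work is matching their hypotheses to the present setting. Throughout, $\succ^O$ and $\succ^A$ are generated by the scores $\sigma$ and the bonus points $b$, with $b_i^s=0$ for all $i\in I^M$ and all $s\in S$. This is the standing assumption of Proposition 4, and the corollary is read under it.

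\emph{Step 1: every improved student is a minority student.} Suppose $i\in I(\succ^O,\succ^A)$. Then there are $j\in I$ and $s\in S$ with $j\succ_s^O i$ and $i\succ_s^A j$, that is, $\sigma_j^s>\sigma_i^s$ and $\sigma_i^s+b_i^s>\sigma_j^s+b_j^s$. If $i$ were in $I^M$, then $b_i^s=0$, so $\sigma_i^s>\sigma_j^s+b_j^s\geq \sigma_j^s$ because $b_j^s\geq 0$. This contradicts $\sigma_j^s>\sigma_i^s$. Hence $I(\succ^O,\succ^A)\subseteq I^m$.

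\emph{Step 2: apply Theorem 1 with $I'=I^m$.} By Step 1, $I^m\supseteq I(\succ^O,\succ^A)$, so $\phi^*(\succ^O,\succ^A)=EA((\succ^O\cap\succ^A),\succ^A)$ is an $I^m$-optimally weakly stable matching. This is exactly the content of Proposition 4.

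\emph{Step 3: apply Proposition 2.} Take Proposition 2 with $I'=I^m$ and $\mu^*=\phi^*(\succ^O,\succ^A)$. Any $\mu'$ that Pareto dominates $\mu^*$ for $I^m$ then contains a student $i$ with justified envy toward some $j$ at $\mu'$ under both $\succ^O$ and $\succ^A$. This is the claimed statement.

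The only substantive point is the sign argument in Step 1, which uses both the nonnegativity of bonus points and the absence of bonuses for majority students. Everything else is bookkeeping of the hypotheses of Proposition 2 and Theorem 1.
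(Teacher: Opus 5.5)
Your proof is correct and follows the paper's route: the paper obtains the corollary directly from Proposition 4 (Theorem 1 with $I'=I^m$, via $I(\succ^O,\succ^A)\subseteq I^m$) combined with Proposition 2, exactly as in your Steps 2 and 3. Your Step 1 spells out the sign argument behind $I^M\cap I(\succ^O,\succ^A)=\emptyset$, which the paper only asserts, and you are right to read the corollary under Proposition 4's standing hypothesis $b_i^s=0$ for all $i\in I^M$, since the corollary can fail without it.
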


Thus, any Pareto improvement over $\phi ^{\ast }\left( \succ ^{O},\succ
^{A}\right) $ for the minority students requires accepting an instance of
justified envy under both $\succ ^{O}$ and $\succ ^{A}$. Equivalently, it
requires a priority violation that can be justified by neither priority
order.

Next, we can consider other priority orders of $s$ based on $\left(
b_{i}^{s}\right) _{i\in I}$ and $\left( \sigma _{i}^{s}\right) _{i\in I}$.
Let $\succ _{s}^{\delta _{s}}$ be such that $i\succ _{s}^{\delta _{s}}j$ if
and only if 
\begin{equation*}
\left( 1-\delta _{s}\right) \sigma _{i}^{s}+\delta _{s}\left(
b_{i}^{s}+\sigma _{i}^{s}\right) =\sigma _{i}^{s}+\delta
_{s}b_{i}^{s}>\sigma _{j}^{s}+\delta _{s}b_{j}^{s}=\left( 1-\delta
_{s}\right) \sigma _{j}^{s}+\delta _{s}\left( b_{j}^{s}+\sigma
_{j}^{s}\right) ,
\end{equation*}%
where $\delta _{s}\in \left[ 0,1\right] $. That is, the priority order $%
\succ _{s}^{\delta _{s}}$ is based on the weighted sum of the original score
and the adjusted score. Trivially, $\succ _{s}^{0}=\succ _{s}^{O}$ and $%
\succ _{s}^{1}=\succ _{s}^{A}$.

Next, we consider another bonus point profile. Let $b_{i}^{s\prime }\in 
\mathbb{R}_{+}$ be the bonus point for student $i$ at school $s$ and $%
b^{\prime }=\left( b_{i}^{s\prime }\right) _{s\in S,i\in I}\in \mathbb{R}%
_{+}^{\left\vert I\right\vert \times \left\vert S\right\vert }$. Let $\succ
_{s}^{A\prime }$ be such that $i\succ _{s}^{A\prime }j$ if and only if $%
\sigma _{i}^{s}+b_{i}^{s\prime }>\sigma _{j}^{s}+b_{j}^{s\prime }$. To avoid
a trivial case, we assume that $\succ ^{O}$, $\succ ^{A}$ and $\succ
^{A\prime }$ are distinct.

Suppose 
\begin{eqnarray}
b_{i}^{s} &=&b_{i}^{s\prime }=0\text{ for all }i\in I^{M}\text{ and all }%
s\in S,  \label{a1} \\
b_{i}^{s} &\geq &b_{i}^{s\prime }\geq 0\text{ for all }i\in I^{m}\text{ and
all }s\in S,  \label{a2} \\
b_{i}^{s} &>&b_{i}^{s\prime }\text{ for some }i\in I^{m}\text{ and some }%
s\in S.  \label{a3}
\end{eqnarray}%
These three assumptions do not ensure that $\succ ^{A}$ represents a
stronger improvement of $\succ ^{O}$ than $\succ ^{A\prime }$. For example,
suppose, for $i,j\in I^{m},$ 
\begin{equation*}
\sigma _{i}^{s}=10,\text{ }\sigma _{j}^{s}=11,\text{ }b_{i}^{s\prime }=3,%
\text{ }b_{j}^{s\prime }=1,\text{ }b_{i}^{s}=4,\text{ and }b_{j}^{s}=5.
\end{equation*}%
Then, $j\succ _{s}^{O}i$ and $i\succ _{s}^{A\prime }j$, but $j\succ
_{s}^{A}i $. Thus, we consider an additional assumption as follows. For all $%
s\in S$ and all $i,j\in I$,%
\begin{equation}
b_{i}^{s\prime }>b_{j}^{s\prime }\Rightarrow b_{i}^{s\prime }-b_{j}^{s\prime
}\leq b_{i}^{s}-b_{j}^{s}.  \label{a4}
\end{equation}

Assumption (\ref{a4}) means that if student $i$ receives a larger bonus than
student $j$ under $b^{\prime }$, then $i$'s bonus-point advantage over $j$
is at least as large under $b$. In other words, under this policy, student $%
i $ is treated as a student who should receive stronger preferential
treatment than student $j$. Assumption (\ref{a4}) requires that this
relative policy priority be preserved when the priority adjustment is
strengthened from $b^{\prime }$ to $b$.

We have the following result.

\begin{proposition}
If $b,b^{\prime }\in \mathbb{R}_{+}^{\left\vert I\right\vert \times
\left\vert S\right\vert }$ satisfy (\ref{a1}) to (\ref{a4}), then $\succ
^{A} $ represents a stronger improvement of $\succ ^{O}$ than $\succ
^{A\prime }$.
\end{proposition}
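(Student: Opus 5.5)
We need to show: for all $s$ and all $i,j$, if $i\succ_s^O j$ and $j\succ_s^{A\prime} i$, then $j\succ_s^A i$. The plan is to translate each priority relation into a score inequality and then do a short case analysis on the bonus difference under $b^{\prime}$. Concretely, we assume $\sigma_i^s>\sigma_j^s$ and $\sigma_j^s+b_j^{s\prime}>\sigma_i^s+b_i^{s\prime}$, and we aim to derive $\sigma_j^s+b_j^s>\sigma_i^s+b_i^s$.

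The first step is to note that the two hypotheses together give $b_j^{s\prime}-b_i^{s\prime}>\sigma_i^s-\sigma_j^s>0$. In particular $b_j^{s\prime}>b_i^{s\prime}$, so the antecedent of (\ref{a4}) holds with the roles of the two students swapped (apply (\ref{a4}) to the pair $(j,i)$).

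The second step applies (\ref{a4}) to obtain $b_j^{s}-b_i^{s}\geq b_j^{s\prime}-b_i^{s\prime}$. Chaining the inequalities gives
\begin{equation*}
b_j^{s}-b_i^{s}\geq b_j^{s\prime}-b_i^{s\prime}>\sigma_i^s-\sigma_j^s,
\end{equation*}
which rearranges to $\sigma_j^s+b_j^s>\sigma_i^s+b_i^s$, that is, $j\succ_s^A i$. This is exactly the definition of $\succ^A$ representing a stronger improvement of $\succ^O$ than $\succ^{A\prime}$.

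Assumptions (\ref{a1})--(\ref{a3}) are not needed for this implication itself. They only guarantee that the setting is the intended nontrivial one: majority students receive no bonus, and $b$ weakly exceeds $b^{\prime}$ for minority students, strictly for at least one. Together with the standing distinctness assumption on $\succ^O$, $\succ^A$ and $\succ^{A\prime}$, they are consistent with the hypotheses but play no role in the argument. No step is a real obstacle; the only care needed is to apply (\ref{a4}) with the students in the correct order, since its antecedent requires a strict positive bonus gap under $b^{\prime}$, and this follows from $\sigma_i^s>\sigma_j^s$.
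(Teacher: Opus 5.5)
Your proof is correct and is essentially the paper's argument: you turn the hypotheses into $0<\sigma_i^s-\sigma_j^s<b_j^{s\prime}-b_i^{s\prime}$, apply (\ref{a4}) to the student with the larger $b^{\prime}$-bonus, and chain the inequalities; the paper does the same with the roles of $i$ and $j$ swapped. Your remark that (\ref{a1})--(\ref{a3}) are not needed also matches the paper, whose proof uses only (\ref{a4}).
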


\textbf{Proof. }If $j\succ _{s}^{O}i$ and $i\succ _{s}^{A\prime }j,$ then 
\begin{equation*}
0<\sigma _{j}^{s}-\sigma _{i}^{s}<b_{i}^{s\prime }-b_{j}^{s\prime }.
\end{equation*}%
By $b_{i}^{s\prime }>b_{j}^{s\prime }$ and Assumption (\ref{a4}), 
\begin{equation*}
\sigma _{j}^{s}-\sigma _{i}^{s}<b_{i}^{s\prime }-b_{j}^{s\prime }\leq
b_{i}^{s}-b_{j}^{s}.
\end{equation*}%
Therefore, 
\begin{equation*}
\sigma _{j}^{s}+b_{j}^{s}<\sigma _{i}^{s}+b_{i}^{s},
\end{equation*}%
which implies $i\succ _{s}^{A}j$. \textbf{Q.E.D.}\newline

Therefore, if $b,b^{\prime }\in \mathbb{R}_{+}^{\left\vert I\right\vert
\times \left\vert S\right\vert }$ satisfy Assumptions (\ref{a1})-(\ref{a4}),
then $b$ can be interpreted as a stronger priority adjustment in favor of
minority students than $b^{\prime }$. By Theorem 2 and Proposition 5, we
obtain that the matching induced by the weaker adjustment does not Pareto
dominate the matching induced by the stronger adjustment for the minority
group.

\subsection{Component Tests and Their Weighted-Average Priority}

Suppose that each school evaluates students using two component tests, $X$
and $Y$. For example, $X$ may be a nationwide standardized academic
examination, whereas $Y$ may be a school-specific examination. The score of
student $i$ at school $s$ of component $Z\in \left\{ X,Y\right\} $ is
denoted by $\sigma _{i}^{Z_{s}}\in \mathbb{R}_{+}$. Fix $\sigma
_{i}^{Z_{s}}\in \mathbb{R}_{+}$ for all $s\in S$, all $i\in I$ and all $Z\in
\left\{ X,Y\right\} $.

In some school admissions systems, students are evaluated using a weighted
average of their scores on these components.\footnote{%
Such weighted-score evaluation is commonly used in admissions to Japanese
national universities, where applicants are evaluated based on a combination
of their scores on the nationwide Common Test for University Admissions and
a university-specific examination.} Thus, for each $\delta _{s}\in \left[ 0,1%
\right] $, we define student $i$'s composite score at school $s$ by 
\begin{equation*}
\sigma _{i}^{s}\left( \delta _{s}\right) =\delta _{s}\sigma
_{i}^{X_{s}}+\left( 1-\delta _{s}\right) \sigma _{i}^{Y_{s}}.
\end{equation*}%
Define $\succ _{s}^{\delta _{s}}$ by 
\begin{equation*}
i\succ _{s}^{\delta _{s}}j\text{ }\Leftrightarrow \sigma _{i}^{s}\left(
\delta _{s}\right) >\sigma _{j}^{s}\left( \delta _{s}\right) .
\end{equation*}%
This relation $\succ _{s}^{\delta _{s}}$ need not be linear order, because $%
\sigma _{i}^{s}\left( \delta _{s}\right) =\sigma _{j}^{s}\left( \delta
_{s}\right) $ may hold for some $\delta _{s}\in \left[ 0,1\right] $.

Suppose that each school has an admissible range of weights, with a given
lower bound $\underline{\delta }_{s}$ and upper bound $\overline{\delta }%
_{s} $, where 
\begin{equation*}
0\leq \underline{\delta }_{s}\leq \overline{\delta }_{s}\leq 1.
\end{equation*}%
A weight $\delta _{s}$ is admissible for school $s$ if $\delta _{s}\in \left[
\underline{\delta }_{s},\overline{\delta }_{s}\right] $. We assume that the
priority orders induced by the two endpoint weights, $\succ _{s}^{\underline{%
\delta }_{s}}$ and $\succ _{s}^{\overline{\delta }_{s}}$, are linear. For an
interior admissible weight $\delta _{s}$, we explicitly impose linearity
whenever it is required.

To illustrate the problem, we introduce the following example. There are
four students, whose preferences are the same as in Example 1, and four
schools. Each school evaluates students using a nationwide standardized
academic examination, $X$, and a school-specific examination, $Y$. The
students' scores are summarized in the following table.

\begin{center}
$%
\begin{array}{ccccc}
& i_{1} & i_{2} & i_{3} & i_{4} \\ 
\sigma ^{X} & 60 & 100 & 80 & 40 \\ 
\sigma ^{Y_{s_{1}}} & 100 & 80 & 60 & 40 \\ 
\sigma ^{Y_{s_{2}}} & 100 & 70 & 90 & 80 \\ 
\sigma ^{Y_{s_{3}}} & 100 & 40 & 50 & 90 \\ 
\sigma ^{Y_{s_{4}}} & 100 & 80 & 60 & 40%
\end{array}%
$
\end{center}

Since $X$ is a standardized academic examination, its score does not depend
on the school; that is, $\sigma _{i}^{Xs_{j}}=\sigma _{i}^{X}$ for all $i\in
\left\{ i_{1},\cdots ,i_{4}\right\} $ and $j\in \left\{ 1,\cdots ,4\right\} $%
. The lower and upper bounds on the admissible weight assigned to $X$ by
each school are summarized in the following table.

\begin{center}
$%
\begin{array}{ccccc}
& s_{1} & s_{2} & s_{3} & s_{4} \\ 
\underline{\delta }_{s} & 0.2 & 0.3 & 0.2 & 0.2 \\ 
\overline{\delta }_{s} & 0.3 & 0.6 & 0.4 & 0.3%
\end{array}%
$
\end{center}

Thus, the priority orders induced by the extreme admissible weights coincide
with the original and adjusted priority orders introduced in Example 1. More
precisely, they are given as follows:%
\begin{gather*}
\succ _{s_{1}}^{\underline{\delta }}=\succ _{s_{1}}^{\overline{\delta }%
}:i_{1},i_{2},i_{3},i_{4},\text{ } \\
\succ _{s_{2}}^{\underline{\delta }}:i_{1},i_{3},i_{2},i_{4},\text{ }\succ
_{s_{2}}^{\overline{\delta }}:i_{2},i_{3},i_{1},i_{4}, \\
\succ _{s_{3}}^{\underline{\delta }}:i_{1},i_{4},i_{3},i_{2},\text{ }\succ
_{s_{3}}^{\overline{\delta }}:i_{1},i_{4},i_{2},i_{3},\text{ and} \\
\succ _{s_{4}}^{\underline{\delta }}=\succ _{s_{4}}^{\overline{\delta }%
}:i_{1},i_{2},i_{3},i_{4}.
\end{gather*}

Suppose that the upper-bound weight profile $\bar{\delta}$ is adopted. Then,
the SOSM under $\succ ^{\overline{\delta }}$ is $\mu _{1}$. As stated
earlier, this matching is Pareto dominated by $\mu _{2}$ for $I$. At $\mu
_{2}$, student $i_{3}$ has justified envy toward $i_{1}$ at $\mu _{2}$ under 
$\succ ^{\overline{\delta }}$, because $s_{2}P_{i_{3}}s_{4}=\mu _{2}\left(
i_{3}\right) $ and 
\begin{gather*}
\sigma _{i_{3}}^{s_{2}}\left( 0.6\right) =0.6\times 80+0.4\times 90=84 \\
>76=0.6\times 60+0.4\times 100=\sigma _{i_{1}}^{s_{2}}\left( 0.6\right) .
\end{gather*}%
However, $\sigma _{i_{3}}^{s_{2}}\left( \delta _{s_{2}}\right) <\sigma
_{i_{1}}^{s_{2}}\left( \delta _{s_{2}}\right) $ whenever $\delta
_{s_{2}}<1/3 $. Since each $\delta _{s_{2}}\in \left[ 0.3,0.6\right] $ is
admissible, there are admissible weights---for example, any $\delta
_{s_{2}}\in \left[ 0.3,1/3\right) $---under which $i_{3}$ has a lower
composite score than $i_{1}$. Thus, this justified envy at $\mu _{2}$ under $%
\succ ^{\overline{\delta }}$ is not robust to all admissible weights. This
implies that $\mu _{2}$ does not generate justified envy that persists
throughout the admissible range and is therefore acceptable under the
proposed criterion.

On the other hand, at $\mu _{4}$, 
\begin{equation*}
\mu _{4}\left( i_{3}\right) =s_{1}P_{i_{2}}s_{4}=\mu _{4}\left( i_{2}\right) 
\text{ and }\sigma _{i_{2}}^{s_{1}}\left( \delta _{s_{1}}\right) >\sigma
_{i_{3}}^{s_{1}}\left( \delta _{s_{1}}\right)
\end{equation*}
for any admissible weight $\delta _{s_{1}}\in \left[ 0.2,0.3\right] $. Thus, 
$i_{2}$ has justified envy toward $i_{3}$ at $\mu _{4}$ under every
admissible weight. Consequently, this justified envy is robust, and $\mu
_{4} $ is not permitted under the proposed rule.

As shown below, by applying our preceding results, we introduce a method for
identifying a Pareto-undominated matching among those that are acceptable in
this sense.

We have the following result.

\begin{lemma}
For any $\delta _{s}\in \left[ \underline{\delta }_{s},\overline{\delta }_{s}%
\right] $ such that $\succ _{s}^{\delta _{s}}$ is a linear order, $\succ
_{s}^{\delta _{s}}$ is a linear order extension of $\left( \succ _{s}^{%
\underline{\delta }_{s}}\cap \succ _{s}^{\overline{\delta }_{s}}\right) .$
\end{lemma}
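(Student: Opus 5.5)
The plan is to use that the composite score $\sigma_i^s(\delta_s)$ is affine in $\delta_s$. Since $\succ_s^{\delta_s}$ is assumed to be a linear order, the only thing to verify is the extension property: whenever $(i,j)\in(\succ_s^{\underline{\delta}_s}\cap\succ_s^{\overline{\delta}_s})$, we must have $i\succ_s^{\delta_s}j$.

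First I fix $s$, a pair $i,j$, and set
\begin{equation*}
f(\delta)=\sigma_i^s(\delta)-\sigma_j^s(\delta)=\delta\left(\sigma_i^{X_s}-\sigma_j^{X_s}\right)+(1-\delta)\left(\sigma_i^{Y_s}-\sigma_j^{Y_s}\right),
\end{equation*}
which is an affine (in particular convex-combination-preserving) function of $\delta\in[0,1]$. The hypothesis $(i,j)\in(\succ_s^{\underline{\delta}_s}\cap\succ_s^{\overline{\delta}_s})$ says exactly $f(\underline{\delta}_s)>0$ and $f(\overline{\delta}_s)>0$.

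Next, take an admissible $\delta_s\in[\underline{\delta}_s,\overline{\delta}_s]$. If $\underline{\delta}_s=\overline{\delta}_s$, then $\delta_s$ equals both endpoints and there is nothing to show. Otherwise, write $\delta_s=\lambda\underline{\delta}_s+(1-\lambda)\overline{\delta}_s$ with $\lambda\in[0,1]$. Since $f$ is affine, $f(\delta_s)=\lambda f(\underline{\delta}_s)+(1-\lambda)f(\overline{\delta}_s)>0$, a convex combination of two positive numbers. Hence $\sigma_i^s(\delta_s)>\sigma_j^s(\delta_s)$, that is, $i\succ_s^{\delta_s}j$.

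Combined with the assumed linearity of $\succ_s^{\delta_s}$, this shows that $\succ_s^{\delta_s}$ is a linear order extension of $(\succ_s^{\underline{\delta}_s}\cap\succ_s^{\overline{\delta}_s})$ in the sense defined before Remark 1. There is no real obstacle here. The only care needed is to notice that linearity of the interior order is a hypothesis rather than something to prove, and that the endpoint orders are linear by the standing assumption, so the intersection is the partial order of Lemma 1.
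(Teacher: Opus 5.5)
Your proposal is correct and follows essentially the same argument as the paper: you write the admissible weight as a convex combination of $\underline{\delta}_s$ and $\overline{\delta}_s$, and use that the score difference is affine in the weight to obtain a convex combination of two strictly positive numbers. Your separate treatment of the case $\underline{\delta}_s=\overline{\delta}_s$ is harmless but unnecessary, since the convex-combination argument already covers it.
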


\textbf{Proof. }Suppose $\left( i,j\right) \in \left( \succ _{s}^{\underline{%
\delta }_{s}}\cap \succ _{s}^{\overline{\delta }_{s}}\right) $; that is, $%
i\succ _{s}^{\underline{\delta }_{s}}j$ and $i\succ _{s}^{\overline{\delta }%
_{s}}j$. For any $\delta _{s}\in \left[ \underline{\delta }_{s},\overline{%
\delta }_{s}\right] ,$ there is $\lambda \in \left[ 0,1\right] $ such that 
\begin{equation*}
\delta _{s}=\lambda \underline{\delta }_{s}+\left( 1-\lambda \right) 
\overline{\delta }_{s}.
\end{equation*}%
Since $\sigma _{i}^{s}\left( \delta _{s}\right) $ is affine in $\delta _{s},$
we have 
\begin{equation*}
\sigma _{i}^{s}\left( \delta _{s}\right) -\sigma _{j}^{s}\left( \delta
_{s}\right) =\lambda \left( \sigma _{i}^{s}\left( \underline{\delta }%
_{s}\right) -\sigma _{j}^{s}\left( \underline{\delta }_{s}\right) \right)
+\left( 1-\lambda \right) \left( \sigma _{i}^{s}\left( \overline{\delta }%
_{s}\right) -\sigma _{j}^{s}\left( \overline{\delta }_{s}\right) \right) .
\end{equation*}%
Both terms in brackets are strictly positive. Hence, $\sigma _{i}^{s}\left(
\delta _{s}\right) >\sigma _{j}^{s}\left( \delta _{s}\right) $. \textbf{%
Q.E.D.}\newline

We consider the following priority profile 
\begin{equation*}
\succ ^{\delta }=\left( \succ _{s}^{\delta _{s}}\right) _{s\in S},\text{ }%
\succ ^{\underline{\delta }}=\left( \succ _{s}^{\underline{\delta }%
_{s}}\right) _{s\in S},\text{ and }\succ ^{\bar{\delta}}=\left( \succ _{s}^{%
\bar{\delta}_{s}}\right) _{s\in S}.
\end{equation*}

\begin{lemma}
Fix $\delta \in \left[ 0,1\right] ^{\left\vert S\right\vert }$ satisfying $%
\delta _{s}\in \left[ \underline{\delta }_{s},\overline{\delta }_{s}\right] $
for all $s\in S$. Suppose that $\succ ^{\delta }$ is a linear order profile.
If a matching $\mu $ is stable under $\succ ^{\delta }$, then $\mu $ is
weakly stable under $\left( \succ ^{\underline{\delta }},\succ ^{\bar{\delta}%
}\right) $.
\end{lemma}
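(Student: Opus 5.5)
The plan is to reduce the claim to fairness under the intersection profile, using Proposition 1 and Lemma 5 (the immediately preceding lemma). By definition, weak stability under $\left( \succ ^{\underline{\delta }},\succ ^{\bar{\delta}}\right)$ consists of individual rationality, non-wastefulness, and A-fairness. By Proposition 1, A-fairness is equivalent to fairness under $\left( \succ ^{\underline{\delta }}\cap \succ ^{\bar{\delta}}\right)$. Individual rationality and non-wastefulness involve only preferences and capacities, so they carry over verbatim from stability of $\mu$ under $\succ ^{\delta }$. It therefore remains to show that a matching fair under $\succ ^{\delta }$ is fair under $\left( \succ ^{\underline{\delta }}\cap \succ ^{\bar{\delta}}\right)$.

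For this I argue by contraposition. Suppose $\mu$ is not fair under the intersection. Then there are students $i,j$ with $s=\mu (j)P_{i}\mu (i)$ and $(i,j)\in \left( \succ _{s}^{\underline{\delta }_{s}}\cap \succ _{s}^{\overline{\delta }_{s}}\right)$. Since $\succ ^{\delta }$ is a linear order profile, each $\succ _{s}^{\delta _{s}}$ is a linear order with $\delta _{s}$ admissible. Lemma 5 then says $\succ _{s}^{\delta _{s}}$ is a linear order extension of the intersection, so $i\succ _{s}^{\delta _{s}}j$. Hence $i$ has justified envy toward $j$ under $\succ ^{\delta }$, contradicting the stability of $\mu$ under $\succ ^{\delta }$.

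I do not expect a real obstacle here; the argument is essentially a composition of earlier results. The only point needing care is the hypotheses of Proposition 1 and Lemma 5. Proposition 1 is stated for linear original and adjusted orders, and this is guaranteed because the endpoint orders $\succ _{s}^{\underline{\delta }_{s}}$ and $\succ _{s}^{\overline{\delta }_{s}}$ are assumed linear. Lemma 5 needs $\succ _{s}^{\delta _{s}}$ to be linear, which is exactly the hypothesis that $\succ ^{\delta }$ is a linear order profile. Strictly speaking, the contraposition step only uses the direction of Lemma 5's proof that strict positivity at both endpoints gives strict positivity at $\delta _{s}$, so linearity enters only to make the notion of stability under $\succ ^{\delta }$ standard.
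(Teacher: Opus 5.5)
Your proof is correct and is the same argument as the paper's: the immediately preceding lemma (Lemma 3 in the paper's numbering, which you call Lemma 5) makes $\succ^{\delta}$ a linear order extension profile of $\left(\succ^{\underline{\delta}}\cap\succ^{\bar{\delta}}\right)$, so stability under $\succ^{\delta}$ passes to stability under the intersection, and Proposition 1 then gives weak stability. Your version only spells out explicitly the fairness transfer and the carry-over of individual rationality and non-wastefulness, which the paper leaves implicit.
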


\textbf{Proof. }Let $\mu $ be a stable matching under $\succ ^{\delta }$. By
Lemma 3, $\succ ^{\delta }$ is a linear order extension profile of $\left(
\succ ^{\underline{\delta }}\cap \succ ^{\overline{\delta }}\right) $. Thus, 
$\mu $ is also stable under $\left( \succ ^{\underline{\delta }}\cap \succ ^{%
\bar{\delta}}\right) $. By Proposition 1, $\mu $ is weakly stable under $%
\left( \succ ^{\underline{\delta }},\succ ^{\bar{\delta}}\right) $. \textbf{%
Q.E.D.}\newline

As a conclusion of this section, we provide the following result.

\begin{proposition}
Fix $\delta \in \left[ 0,1\right] ^{\left\vert S\right\vert }$ satisfying $%
\delta _{s}\in \left[ \underline{\delta }_{s},\overline{\delta }_{s}\right] $
for all $s\in S$. Suppose that $\succ ^{\delta }$ is a linear order and
define 
\begin{equation*}
\mu ^{\ast }\left( \delta \right) =EA\left( \left( \succ ^{\underline{\delta 
}}\cap \succ ^{\bar{\delta}}\right) ,\succ ^{\delta }\right) .
\end{equation*}%
Then,

\begin{description}
\item[(i)] If $\succ ^{\delta }$ is a linear order profile, then $\mu ^{\ast
}\left( \delta \right) $ is SOWSM under $\left( \succ ^{\underline{\delta }%
},\succ ^{\bar{\delta}}\right) $ and weakly Pareto dominates any stable
matching under $\succ ^{\delta }$.

\item[(ii)] If there are students $i$ and $j$ such that $\mu ^{\ast }\left(
\delta \right) \left( j\right) =sP_{i}\mu ^{\ast }\left( \delta \right)
\left( i\right) $ and $\sigma _{i}^{s}\left( \delta _{s}\right) >\sigma
_{j}^{s}\left( \delta _{s}\right) $, then $\mu ^{\ast }\left( \delta \right) 
$ Pareto dominates any stable matching under $\succ ^{\delta }$ and either $%
\sigma _{i}^{s}\left( \underline{\delta }_{s}\right) <\sigma _{j}^{s}\left( 
\underline{\delta }_{s}\right) $ or $\sigma _{i}^{s}\left( \overline{\delta }%
_{s}\right) <\sigma _{j}^{s}\left( \overline{\delta }_{s}\right) $.

\item[(iii)] If $\mu ^{\ast }\left( \delta \right) $ is Pareto dominated by $%
\mu ^{\prime },$ then there are students $i$ and $j$ such that $\mu ^{\prime
}\left( j\right) =sP_{i}\mu ^{\prime }\left( i\right) $ and $\sigma
_{i}^{s}\left( \delta _{s}\right) >\sigma _{j}^{s}\left( \delta _{s}\right) $
for all $\delta _{s}\in \left[ \underline{\delta }_{s},\overline{\delta }_{s}%
\right] .$
\end{description}
\end{proposition}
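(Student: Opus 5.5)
The plan is to reduce each part of Proposition 6 to results already proved, by treating $\succ^{\underline{\delta}}$ and $\succ^{\bar{\delta}}$ as the original and adjusted profiles, so that Corollary 1, Proposition 3 and Proposition 2 apply. The bridge is Lemma 3: since $\delta_s\in[\underline{\delta}_s,\overline{\delta}_s]$ and $\succ^{\delta}$ is linear, $\succ^{\delta}$ is a linear order extension profile of $(\succ^{\underline{\delta}}\cap\succ^{\bar{\delta}})$. Both endpoint profiles are linear by assumption, so the framework of Section 2.2 applies. If the two endpoint profiles happen to coincide, the standing assumption $\succ^O\neq\succ^A$ fails. I expect the arguments below to go through unchanged in that case, since they use only the intersection and Proposition 1, but I would state this explicitly.

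For (i), Corollary 1 applied with $\succ^{\ast}=\succ^{\delta}$ immediately gives that $\mu^{\ast}(\delta)$ is an SOWSM under $(\succ^{\underline{\delta}},\succ^{\bar{\delta}})$. For the dominance claim, Proposition 3 says that $\mu^{\ast}(\delta)$ weakly Pareto dominates the SOSM $\mu^{\ast\ast}$ under $\succ^{\delta}$. Because $\succ^{\delta}$ is linear, the SOSM is the SPDA outcome, and every stable matching under $\succ^{\delta}$ is weakly Pareto dominated by it. Weak Pareto dominance is transitive, so $\mu^{\ast}(\delta)$ weakly Pareto dominates every stable matching under $\succ^{\delta}$.

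For (ii), the hypothesis says that $i$ has justified envy toward $j$ at $\mu^{\ast}(\delta)$ under $\succ^{\delta}$. The second part of Proposition 3 then gives $j\succ_s^{\underline{\delta}}i$ or $j\succ_s^{\bar{\delta}}i$, which is the same as $\sigma_i^s(\underline{\delta}_s)<\sigma_j^s(\underline{\delta}_s)$ or $\sigma_i^s(\overline{\delta}_s)<\sigma_j^s(\overline{\delta}_s)$. Proposition 3 also gives that $\mu^{\ast}(\delta)$ Pareto dominates $\mu^{\ast\ast}$. To extend this to every stable matching $\nu$ under $\succ^{\delta}$, note that $\mu^{\ast\ast}$ weakly dominates $\nu$, so combining gives $\mu^{\ast}(\delta)(k)R_k\nu(k)$ for all $k$, with strictness for some student. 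Strictness holds because $\mu^{\ast}(\delta)\neq\mu^{\ast\ast}$ as assignments: $\mu^{\ast}(\delta)$ is unfair under $\succ^{\delta}$ while $\mu^{\ast\ast}$ is not. I expect this step to need the most care. It relies on weak dominance plus inequality of assignments giving strict dominance, which holds because preferences are strict.

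For (iii), apply Proposition 2 with $I'=I$, which is allowed since $\mu^{\ast}(\delta)$ is an SOWSM by (i). This gives $i,j$ with $\mu'(j)=sP_i\mu'(i)$, $i\succ_s^{\underline{\delta}}j$ and $i\succ_s^{\bar{\delta}}j$. These two relations mean $\sigma_i^s(\cdot)-\sigma_j^s(\cdot)>0$ at both endpoints. The difference is affine in $\delta_s$, so, exactly as in the proof of Lemma 3, writing any admissible $\delta_s$ as a convex combination of the endpoints shows the difference is positive on all of $[\underline{\delta}_s,\overline{\delta}_s]$. This gives (iii).
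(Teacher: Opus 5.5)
Your proposal is correct and follows essentially the same route as the paper. Lemma 3 makes $\succ^{\delta}$ a linear order extension profile of $(\succ^{\underline{\delta}}\cap\succ^{\bar{\delta}})$; then (i) and (ii) follow from Corollary 1 and Proposition 3, combined with the student-optimality of the SPDA outcome under $\succ^{\delta}$, and (iii) follows from Proposition 2 together with the affine-interpolation argument. Your write-up is slightly more careful than the paper's in two places: you explicitly extend the dominance from the SOSM to every stable matching, and you apply the affine argument directly rather than through Lemma 3, whose statement only covers weights for which $\succ_s^{\delta_s}$ is linear.
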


By Lemma 3, if $\succ ^{\delta }$ is a linear order profile, then $\succ
^{\delta }$ is a linear order extension profile of $\left( \succ ^{%
\underline{\delta }}\cap \succ ^{\overline{\delta }}\right) $. By
Proposition 3, $\mu ^{\ast }\left( \delta \right) $ is SOWSM under $\left(
\succ ^{\underline{\delta }},\succ ^{\bar{\delta}}\right) $. Moreover, $\mu
^{\ast }\left( \delta \right) $ weakly Pareto dominates the SOSM under $%
\succ ^{\delta }$ that weakly Pareto dominates any stable matching under $%
\succ ^{\delta }$. Therefore, we have the first result of Proposition 6.
Next, the conditions on the second result mean that $i$ has a justified envy
toward $j$. Thus, by Proposition 3 and Lemma 3, we have the second result.
Finally, since $\mu ^{\ast }\left( \delta \right) $ is SOWSM under $\left(
\succ ^{\underline{\delta }},\succ ^{\bar{\delta}}\right) ,$ by Proposition
2, there are students $i$ and $j$ such that $\mu ^{\prime }\left( j\right)
=sP_{i}\mu ^{\prime }\left( i\right) $ and both $i\succ _{s}^{\underline{%
\delta }_{s}}j$ and $i\succ _{s}^{\overline{\delta }_{s}}j$. Then, $\left(
i,j\right) \in \left( \succ _{s}^{\underline{\delta }}\cap \succ _{s}^{\bar{%
\delta}}\right) $ and Lemma 3 imply that $i\succ _{s}^{\delta _{s}}j$ for
all $\delta _{s}\in \left[ \underline{\delta }_{s},\overline{\delta }_{s}%
\right] $ and thus, $\sigma _{i}^{s}\left( \delta _{s}\right) >\sigma
_{j}^{s}\left( \delta _{s}\right) $ for all $\delta _{s}\in \left[ 
\underline{\delta }_{s},\overline{\delta }_{s}\right] $.

This proposition shows that our approach can improve student welfare when
schools' priorities are determined by weighted-average scores. For any
admissible weight profile that induces linear priority orders, the proposed
matching weakly Pareto dominates every matching that is stable under the
corresponding weighted-average priority profile. Although this welfare
improvement may create justified envy under some admissible weights, any
such envy is reversed at least one endpoint of the admissible interval and
therefore does not persist throughout the entire admissible range. By
contrast, any further Pareto improvement upon the proposed matching must
generate justified envy that persists throughout the entire admissible
range. Thus, further improving student welfare would require violating a
priority comparison that cannot be reversed by changing the weights within
their admissible ranges.

\section{Concluding Remarks}

Our analysis assumes that both the original and adjusted priority orders are
strict linear orders and thus admit no ties. This assumption is essential
for the equivalence between A-fairness and O-fairness established in
Proposition 1. Indeed, suppose that two students $i$ and $j$ are tied under
a school's original priority order, whereas $i$ is ranked strictly above $j$
under its adjusted priority order. If $j$ is assigned to the school and $i$
prefers that school to their current assignment, the resulting matching can
be O-fair but never be A-fair. Thus, the equivalence in Proposition 1 need
not extend to priority orders that allow ties, and the notion of weak
stability must be reconsidered in such environments. Moreover, even under a
reformulated notion of weak stability, the main results established in
Theorems 1 and 2 may not generally extend to the setting. Kitahara and
Okumura (2023) briefly discuss these issues.

Responsiveness to priority-based affirmative action is also studied Kojima
(2012) and Jiao and Shen (2021). A distinctive feature of our approach is
that we explicitly introduce two priority profiles. Appendix A.6 discusses
in detail how our notion of responsiveness relates to that of Jiao and Shen
(2021).

\section*{References}

\begin{description}
\item Abdulkadiroglu, A. and Grigoryan, A., 2025. Market Design for
Distributional Objectives in Allocation Problems: An Axiomatic Approach
Available at SSRN: http://dx.doi.org/10.2139/ssrn.5284396

\item Afacan, M.O., Salman, U. 2016. Affirmative actions: The Boston
mechanism case. Economics Letters 141, 95--97.

\item Ayg\"{u}n, O., B\'{o}, I., 2021. College Admission with
Multidimensional Privileges: The Brazilian Affirmative Action Case, American
Economic Journal: Microeconomics 13(3), 1--28.

\item Balinski, M., S\"{o}nmez, T. 1999. A tale of two mechanisms: student
placement,\ Journal of Economic Theory 84(1), 73--94.

\item Cerrone, C., Hermstr\"{u}wer, Y., Kesten, O. 2024. School Choice with
Consent: An Experiment, Economic Journal 134(661), 1760--1805.

\item Dur, U., Xie, Y. 2023. Responsiveness to priority-based affirmative
action policy in school choice. Journal of Public Economic Theory 25(2),
229-244

\item Erdil, A., Ergin, H. 2008. What's the matter with tie-breaking?
Improving efficiency in school choice,\ American Economic Review 98(3),
669--689.

\item Gale D., Shapley L.S. 1962. College admissions and the stability of
marriage. American Mathematical Monthly 69(1), 9--15.

\item Hirata, D., Kasuya, Y., Okumura, Y. 2022. Stability,
Strategy-Proofness, and Respect for Improvements, Mimeo Available at SSRN:
https://ssrn.com/abstract=3876865

\item Jiao, Z., Shen, Z. 2021. School choice with priority-based affirmative
action: A responsive solution, Journal of Mathematical Economics 92, 1--9.

\item Jiao, Z., Shen, Z., Tian, G. 2022. When is the deferred acceptance
mechanism responsive to priority-based affirmative action? Social Choice and
Welfare 58, 257--282.

\item Jiao, Z., Tian, G., 2018. Two further impossibility results on
responsive affirmative action in school choice. Economics Letters 166,
60--62.

\item Kesten, O. 2010. School choice with consent. Quarterly Journal of
Economics 125(3), 1297--1348.

\item Kitahara, M., Okumura, Y. 2020. Stable Improvement Cycles in a
Controlled School Choice. Mimeo Available at SSRN:
https://ssrn.com/abstract=3582421

\item Kitahara, M., Okumura, Y. 2021. Improving Efficiency in School Choice
under Partial Priorities, International Journal of Game Theory 50, 971--987.

\item Kitahara, M., Okumura, Y. 2023. School Choice with Multiple
Priorities. Mimeo Available at arXiv:2308.04780

\item Kitahara, M., Okumura, Y. 2026a. A Stable and Strategy-Proof
Controlled School Choice Mechanism with Integrated and Flexible Rules,
International Journal of Game Theory 55, 25

\item Kitahara, M., Okumura, Y. 2026b. Stable Improvement Cycle Mechanism
Versus Efficiency-adjusted Deferred Acceptance Mechanism beyond Weak
Priority Orders Mimeo

\item Kojima, F. 2012. School choice: Impossibilities for affirmative
action. Games and Economic Behavior 75(2), 685--693.

\item Kurata, R., Hamada, N., Iwasaki, A. and Yokoo, M. 2017. Controlled
school choice with soft bounds and overlapping types,\ Journal of Artificial
Intelligence Research 58, 153--184.

\item S\"{o}nmez, T., Yenmez, M.B. 2022. Affirmative action in India via
vertical, horizontal, and overlapping reservations,\ Econometrica\ 90(3),
1143-1176.

\item Tang, Q., Yu, J., 2014. A new perspective on Kesten's school choice
with consent idea. Journal of Economic Theory 154, 543--561.

\item Tang, Q., Zhang, Y. 2021. Weak stability and Pareto efficiency in
school choice. Economic Theory 71, 533--552.

\item Wang, T. 2009. Preferential policies for minority college admission in
China: Recent developments, necessity, and impact,\ In M. Zhou and A.M. Hill
(eds) Affirmative action in China and the U.S., Palgrave Macmillan.\newpage
\end{description}

\section*{Appendix}

\subsection*{A.1 Proof of Proposition 2}

Suppose, to the contrary, that there exists a matching $\mu ^{\prime }$ that
Pareto dominates $\mu ^{\ast }$ for $I^{\prime }$, but no student has
justified envy toward another student at $\mu ^{\prime }$ under both $\succ
^{O}$ and $\succ ^{A}$. By Proposition 1, $\mu ^{\prime }$ is fair under $%
\left( \succ ^{O}\cap \succ ^{A}\right) $.

We first modify $\mu ^{\prime }$ to obtain an individually rational
matching. Define $\mu ^{\prime \prime }$ by 
\[
\mu''(i)=
\begin{cases}
\emptyset & \text{if } \emptyset P_i \mu'(i),\\
\mu'(i)   & \text{if } \mu'(i) R_i \emptyset.
\end{cases}
\]

Then, $\mu ^{\prime \prime }$ is individually rational and weakly Pareto
dominates $\mu ^{\prime }$ for $I$. Therefore, $\mu ^{\prime \prime }$ also
Pareto dominates $\mu ^{\ast }$ for $I^{\prime }$.

We next show that $\mu ^{\prime \prime }$ is fair under $\left( \succ
^{O}\cap \succ ^{A}\right) $. Suppose, to the contrary, that some student $i$
has justified envy toward another student $j$ at $\mu ^{\prime \prime }$
under $\left( \succ ^{O}\cap \succ ^{A}\right) $. Let $s=\mu ^{\prime \prime
}\left( j\right) $. Then, $sP_{i}\mu ^{\prime \prime }\left( i\right) $ and $%
\left( i,j\right) \in \left( \succ _{s}^{O}\cap \succ _{s}^{A}\right) $.
Since $\mu ^{\prime \prime }\left( i\right) R_{i}\mu ^{\prime }\left(
i\right) $ and $\mu ^{\prime \prime }\left( j\right) =\mu ^{\prime }\left(
j\right) =s$, we have $sP_{i}\mu ^{\prime }\left( i\right) $. Thus, $i$ also
has justified envy toward $j$ at $\mu ^{\prime }$ under $\left( \succ
^{O}\cap \succ ^{A}\right) $, contradicting the fairness of $\mu ^{\prime }$
under $\left( \succ ^{O}\cap \succ ^{A}\right) $. Hence $\mu ^{\prime \prime
}$ is fair under $\left( \succ ^{O}\cap \succ ^{A}\right) $.

Let $\mu _{0}=\mu ^{\prime \prime }$. Starting from $\mu _{0}$, consider the
following algorithm.

\begin{description}
\item[\textbf{Step }$k$] For school $s$, let 
\begin{equation*}
I_{k}\left( s\right) =\left\{ \left. i\in I\text{ }\right\vert \text{ }%
sP_{i}\mu _{k-1}\left( i\right) \right\} \text{.}
\end{equation*}%
If there is no school $s$ such that $\left\vert \mu _{k-1}\left( s\right)
\right\vert <q_{s}$ and $I_{k}\left( s\right) \neq \emptyset $, then this
algorithm terminates. Otherwise, then choose such a school $s$ and $i^{\ast
}\in I_{k}\left( s\right) $ such that there is no $i\in I_{k}\left( s\right) 
$ satisfying $\left( i,i^{\ast }\right) \in \left( \succ ^{O}\cap \succ
^{A}\right) $. By Lemma 1, such a student exists. Define $\mu _{k}$ by $\mu
_{k}\left( i^{\ast }\right) =s$ and $\mu _{k}\left( i\right) =\mu
_{k-1}\left( i\right) $ for all $i\in I\setminus \left\{ i^{\ast }\right\} $.
\end{description}

The algorithm terminates after finitely many steps because, whenever a
student is selected, that student is assigned to a strictly

preferred school, and each student has only finitely many possible
assignments. Let $K$ denote the terminal step.

By construction, the following three properties hold. First, for all $%
k=1,\cdots ,K$, $\mu _{k}$ Pareto dominates $\mu _{k-1}$ for $I$. Second,
for all $k=1,\cdots ,K$, if $\mu _{k-1}$ is individually rational and fair
under $\left( \succ ^{O}\cap \succ ^{A}\right) $, then so is $\mu _{k}$.
Individual rationality follows from $i^{\ast }\in I_{k}\left( s\right) $. To
verify fairness, note that the only new instance of justified envy that
could be created at Step $k$ would be toward $i^{\ast }$ at $s$. If some
student $i$ had justified envy toward $i^{\ast }$, then $i^{\ast }\in
I_{k}\left( s\right) $ and $\left( i,i^{\ast }\right) \in \left( \succ
^{O}\cap \succ ^{A}\right) ,$ contradicting the choice of $i^{\ast }$.
Third, $\mu _{K}$ is non-wasteful by the termination condition.

Since $\mu _{0}=\mu ^{\prime \prime }$ is individually rational and fair
under $\left( \succ ^{O}\cap \succ ^{A}\right) $, it follows that $\mu _{K}$
is individually rational, non-wasteful, and fair under $\left( \succ
^{O}\cap \succ ^{A}\right) $. By Proposition 1, $\mu _{K}$ is weakly stable
under $\left( \succ ^{O},\succ ^{A}\right) $. Moreover, since $\mu _{K}$
weakly Pareto dominates $\mu ^{\prime \prime }$ for $I$, and $\mu ^{\prime
\prime }$ Pareto dominates $\mu ^{\ast }$ for $I^{\prime }$, $\mu _{K}$ also
Pareto dominates $\mu ^{\ast }$ for $I^{\prime }$. This contradicts the $%
I^{\prime }$-optimality of $\mu ^{\ast }$.

\subsection*{A.2 Formal description of EADA algorithms}

To prove Proposition 3 and Theorem 1, we first provide a formal description
of the EADA algorithms that are briefly introduced in Section 3.

We first let $G=\left( I,S,P,\succ ,q\right) $ be a school choice problem
where each school has only one partial order priority $\succ _{s}$. To
introduce the formal description of the EADA algorithm, we let a school
choice subproblem be%
\begin{equation*}
\hat{G}=\left( \hat{I},\hat{S},\hat{P},\hat{\succ},\hat{q}\right) \in
2^{I}\times 2^{S}\times \mathbb{P}^{\left\vert I\right\vert }\times \mathcal{%
L}^{\left\vert S\right\vert }\times \mathbb{N}^{^{\left\vert S\right\vert }}
\end{equation*}%
where $\hat{\succ}_{s}$ is a linear order for all $s\in S$. Let $DA\left( 
\hat{G}\right) $ be the outcome of the SPDA algorithm for $\hat{G}$.

A school $s\in \hat{S}$ is said to be \textbf{underdemanded} at a matching $%
\mu $ with $\hat{G}$ if $\mu \left( i\right) \hat{R}_{i}s$ for all $i\in 
\hat{I}$. Then, we have the following result. Note that a school $s$ is
underdemanded at $DA\left( \hat{G}\right) $ with $\hat{G}$\ if and only if $%
s $ never rejects in any steps of the SPDA algorithm for $\hat{G}$.

\begin{description}
\item[Round $0$] Let $\succ ^{\ast }=\left( \succ _{s}^{\ast }\right) _{s\in
S}\in \mathcal{L}^{\left\vert S\right\vert }$ be such that $\succ _{s}^{\ast
}$ is a linear order extension of $\succ _{s}$ for all $s\in S$.

\item[Round $1$] Let $I^{1}=I,$ $S^{1}=S$ and $P^{1}=P$. Run the SPDA for $%
G^{1}=\left( I^{1},S^{1},P^{1},\succ ^{\ast },q\right) $. Let $U^{1}$ be the
set of underdemanded schools at $DA\left( G^{1}\right) $; that is, each $%
s\in U^{1}$ never rejects any student throughout the SPDA of this Round.
Moreover, let 
\begin{equation*}
E^{1}=\bigcup\nolimits_{s\in U^{1}\cup \left\{ \emptyset \right\} }DA\left(
G^{1}\right) \left( s\right) ,
\end{equation*}%
which is the set of students who are matched to an underdemanded school or
unmatched at $DA\left( G^{1}\right) $.

\item[Round $k$] Let $I^{k}=I^{k-1}\setminus E^{k-1}$ and $%
S^{k}=S^{k-1}\setminus U^{k-1}$. For all $i\in I^{k},$ let 
\begin{equation*}
Z_{i}^{k}=\left\{ s\in S^{k}\text{ }\left\vert \text{ }sP_{i}^{k-1}\emptyset
,\text{ }\left( j,i\right) \in \succ _{s}\text{ and }sP_{j}DA\left(
G^{k-1}\right) \left( j\right) ,\text{ for some }j\in E^{k-1}\right.
\right\} \text{,}
\end{equation*}%
and let $P_{i}^{k}$ be a linear order such that for all $s\in Z_{i}^{k}$, $%
\emptyset P_{i}^{k}s$ and for all $s^{\prime },s^{\prime \prime }\in \left(
S\cup \left\{ \emptyset \right\} \right) \setminus Z_{i}^{k},$ $s^{\prime
}P_{i}^{k-1}s^{\prime \prime }$ implies $s^{\prime }P_{i}^{k}s^{\prime
\prime }$. Run the SPDA for $G^{k}=\left( I^{k},S^{k},P^{k},\succ ^{\ast
},q\right) $. Let $U^{k}$ be the set of underdemanded schools at $DA\left(
G^{k}\right) $; that is, each $s\in U^{k}$ never rejects any student
throughout the SPDA of this Round. Moreover, let 
\begin{equation*}
E^{k}=\bigcup\nolimits_{s\in U^{k}\cup \left\{ \emptyset \right\} }DA\left(
G^{k}\right) \left( s\right) ,
\end{equation*}%
which is the set of students who are matched to an underdemanded school (or
unmatched) at $DA\left( G^{k}\right) $.\footnote{%
In each Round $k,$ $E^{k}$ is not empty. First, if no student is rejected,
then all schools are underdemanded and $E^{k}=I^{k}$. Otherwise, then we can
let $i$ be the student who is rejected last by a school. Then, $i$ must be
in $E^{k}$.}\newline
\end{description}

For notational convenience, let $\kappa :I\cup S\rightarrow \mathbb{N}$
satisfy $\kappa \left( i\right) =k$ for all $i\in E^{k}$ and $\kappa \left(
s\right) =k$ for all $s\in U^{k}$; that is, $\kappa \left( i\right) $ and $%
\kappa \left( s\right) $ represent the Rounds in which $i$ and $s$ are
eliminated, respectively. Moreover, let $\mathcal{E}^{k}\mathcal{=}%
\bigcup\nolimits_{k^{\prime }=1}^{k}E^{k^{\prime }},$ which is the set of
students eliminated from Round $1$ to Round $k$.

We let $\hat{\mu}^{k}$ be the matching obtained after Round $k$; that is, 
\begin{eqnarray*}
\hat{\mu}^{k}\left( i\right) &=&DA\left( G^{\kappa \left( i\right) }\right)
\left( i\right) \text{ for all }i\in I\setminus I^{k}\text{ }(\kappa \left(
i\right) <k),\text{ } \\
\hat{\mu}^{k}\left( i\right) &=&DA\left( G^{k}\right) \left( i\right) \text{
for all }i\in I^{k},\text{ }(\kappa \left( i\right) \geq k).
\end{eqnarray*}%
That is, for each student who is remaining in Round $k$, $\hat{\mu}^{k}$ is
the SPDA result in Round $k$, and for each student who is not remaining, it
is the SPDA result in the round when the student is eliminated. In the
terminal step $K$, $S=\bigcup\nolimits_{k=1}^{K}U^{k}$ and the resulting
matching is $\hat{\mu}^{K}$.

Kitahara and Okumura (2026b) show the following two results:

\begin{lemma}
(Kitahara and Okumura 2026, Lemma 1) For all $i\in I$ and all $k=2,3,\cdots
,K,$ $\hat{\mu}^{k}\left( i\right) R_{i}\hat{\mu}^{k-1}\left( i\right) $.
\end{lemma}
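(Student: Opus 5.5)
The plan is to compare Round $k$ with Round $k-1$ through the standard monotonicity of SPDA: when some students are removed from the market and the remaining students' preference lists are only truncated, no remaining student ends up worse off. Fix $i\in I$ and $k\geq 2$.

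If $i\notin I^{k}$, that is, $\kappa(i)<k$, the claim is trivial. In that case $\kappa(i)\leq k-1$, so by the definition of $\hat{\mu}$ we have $\hat{\mu}^{k}(i)=DA(G^{\kappa(i)})(i)=\hat{\mu}^{k-1}(i)$.

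So assume $i\in I^{k}$. Then $\hat{\mu}^{k-1}(i)=DA(G^{k-1})(i)=:s$, and $s\notin U^{k-1}$ (otherwise $i\in E^{k-1}$), so $s\in S^{k}$. It suffices to show $DA(G^{k})(i)\,R_{i}\,s$ under the original $P_i$. Since $P_i^{k}$ only moves schools below $\emptyset$ and never reorders acceptable ones, $R_i^{k}$ and $R_i$ agree on schools acceptable under $P_i^k$, and individual rationality of SPDA handles the rest.

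The key step is to show that the restriction of $\nu:=DA(G^{k-1})$ to $I^{k}$ and $S^{k}$ is stable in $G^{k}$. I will check the stability conditions in turn.

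\emph{Individual rationality.} I need $s\notin Z_i^{k}$. Suppose instead that some $j\in E^{k-1}$ has $(j,i)\in\succ_s$ and $s\,P_j\,DA(G^{k-1})(j)$. We have $s P^{k-1}_j \nu(j)$: as $j$ still lists $\nu(j)$ or $\emptyset$ above removed schools, the ranking of $s$ carries over, and any removal of $s$ from $j$'s list earlier would also have been triggered for $i$ by transitivity of $\succ_s$, which I will check via Lemma 1. Then $j$ proposed to $s$ in Round $k-1$ and was rejected. But $s$ holds $i$ and $j\succ^{*}_s i$, since $\succ^{*}$ extends $\succ$. Because a school's tentative set only improves in $\succ^*$ during SPDA, $s$ would not end up keeping $i$ while having rejected $j$, a contradiction.

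\emph{Capacity and fairness.} Capacities are unchanged. There are no blocking pairs within $I^k\times S^k$, because preferences were only truncated and $\nu$ was stable in $G^{k-1}$.

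\emph{Non-wastefulness.} This is the subtle point. Every $s'\in S^{k}$ rejected someone in Round $k-1$, hence it is full at $\nu$. The students it holds are not in $E^{k-1}$, since they are matched to a non-underdemanded school, so they remain in $I^{k}$ and $s'$ stays full.

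With stability of $\nu$ in $G^k$ established, student-optimality of SPDA under linear $\succ^{*}$ gives $DA(G^{k})(i)\,R_i^{k}\,\nu(i)$, which yields the claim by the remark above.

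The main obstacle I expect is the individual-rationality step, i.e. verifying that $i$'s Round-$(k-1)$ assignment is not deleted from her list in Round $k$. This requires the careful SPDA argument that a school which has rejected a higher-$\succ^*$ student cannot end up holding a lower one, together with the transitivity of $\succ_s$ (Lemma 1) to handle removals carried over from earlier rounds.
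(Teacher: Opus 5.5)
Your proof is correct. The paper gives no proof of this lemma and simply imports it from Kitahara and Okumura (2026b), so there is no in-paper argument to compare against. Your argument is the standard one for this kind of round-by-round monotonicity of EADA. The restriction of $DA(G^{k-1})$ to $I^{k}$ and $S^{k}$ is a matching in $G^{k}$: every $i\in I^{k}$ holds a seat at a school of $S^{k}$, and every school of $S^{k}$ is full with students of $I^{k}$. This restriction is stable in $G^{k}$, and student-optimality of SPDA under the linear profile $\succ^{\ast}$ then gives the claim, after translating $R_i^{k}$ back to $R_i$ on schools that were never deleted.

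You correctly isolate the one nonstandard step, namely $\hat{\mu}^{k-1}(i)\notin Z_i^{k}$. Because $Z_i^{k}$ is defined through the original $P_j$ rather than $P_j^{k-1}$, you must rule out that $s=\hat{\mu}^{k-1}(i)$ had already been deleted from $j$'s list. Your transitivity argument does this. Suppose $s$ was deleted for $j$ at some round $k'\leq k-1$ via some $j'\in E^{k'-1}$ with $(j',j)\in\succ_s$. Then $(j',i)\in\succ_s$, and $s\,P_i^{k'-1}\,\emptyset$ holds. So $s$ would also have been deleted from $i$'s list at round $k'$, contradicting that $i$ holds $s$ at the end of Round $k-1$. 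Once that is settled, $j$ really did propose to $s$ and was rejected in Round $k-1$. Since $j\succ^{\ast}_s i$, $s$ cannot end up holding $i$.

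One small correction. In the general setting of Appendix A.2, the transitivity you use is just the standing assumption that each $\succ_s$ is a strict partial order. Lemma 1 supplies it only in the special case $\succ=\succ^{O}\cap\succ^{A}$.
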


\begin{lemma}
(Kitahara and Okumura 2026, Lemma 5) If $\succ \in \mathcal{P}^{\left\vert
S\right\vert }$, then $\hat{\mu}^{k}$ is stable under $\succ $ for all $%
k=1,\cdots ,K$.
\end{lemma}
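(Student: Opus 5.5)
The plan is to prove the claim by induction on $k$, checking individual rationality, non-wastefulness and fairness under $\succ$ separately. Two earlier facts do most of the work. The first is that each $DA\left( G^{k}\right) $ is stable in the subproblem $G^{k}$ under the linear extension $\succ ^{\ast }$ and the truncated preferences $P^{k}$. The second is Lemma 4: the assignments $\hat{\mu}^{k}(i)$ are weakly improving in $k$. The base case $k=1$ is the ordinary stability of SPDA under $\succ ^{\ast }$. This gives stability under $\succ $, because a violation $i\succ _{s}j$ would give $i\succ _{s}^{\ast }j$.

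\emph{Individual rationality} should be immediate. Passing from $P^{k-1}$ to $P^{k}$ only moves schools of $Z_{i}^{k}$ below $\emptyset $ and leaves the rest of the order unchanged. So anything acceptable under $P^{k}$ is acceptable under $P$. For eliminated students, $\hat{\mu}^{k}(i)$ is their fixed, earlier individually rational assignment.

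\emph{Non-wastefulness.} I would first establish a counting fact. Every school in $S^{k}$ was not underdemanded in Round $k-1$, so it rejected someone and was full at $DA\left( G^{k-1}\right) $. Moreover, $I^{k}$ is exactly the set of its occupants, so $\left\vert I^{k}\right\vert =\sum_{s\in S^{k}}q_{s}$. By Lemma 4, every $i\in I^{k}$ is matched at Round $k$ to a school in $S^{k}$, so every school of $S^{k}$ stays full in every later round.

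Now take a student $i$ and a school $s$ with $sP_{i}\hat{\mu}^{k}(i)$. If $s$ is still active at Round $k$ (that is, $s\in S^{k}$), then $s$ is full, by the counting fact when $k\geq 2$ and by SPDA non-wastefulness when $k=1$. If $s\in U^{k^{\prime }}$ for some $k^{\prime }<k$, there are two cases. Suppose first that $s$ was still on $i$'s list at Round $k^{\prime }$, meaning $s$ is acceptable under $P_{i}^{k^{\prime }}$. Then $i$'s Round-$k^{\prime }$ assignment is weakly below $\hat{\mu}^{k}(i)$ by Lemma 4, so $sP_{i}^{k^{\prime }}DA\left( G^{k^{\prime }}\right) (i)$, which contradicts underdemandedness of $s$. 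Suppose instead that $s$ had been removed earlier by some $Z_{i}^{k^{\prime \prime }}$ with $k^{\prime \prime }\leq k^{\prime }$. Then some earlier-eliminated $j$ proposed to $s$ and was rejected, so $s$ was full then, and by the counting fact it stays full.

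\emph{Fairness under $\succ $} is the main step. Suppose $s=\hat{\mu}^{k}(j)P_{i}\hat{\mu}^{k}(i)$ and $i\succ _{s}j$, so $i\succ _{s}^{\ast }j$. I split the argument according to the rounds $\kappa (i)$, $\kappa (j)$ and $\kappa (s)$.

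\textbf{Case 1:} $i$ and $j$ are both active in the round $m=\min \{k,\kappa (s)\}$ in which $s$'s occupants are last determined. If $s$ is on $i$'s list $P_{i}^{m}$, the envy is a blocking pair for $DA\left( G^{m}\right) $ under $\succ ^{\ast }$, which is impossible. If $s\in Z_{i}^{m^{\prime }}$ for some $m^{\prime }\leq m$, there is an eliminated $l$ with $l\succ _{s}i$ and $sP_{l}\hat{\mu}(l)$. Transitivity of $\succ _{s}$ (the partial-order assumption) then gives $l\succ _{s}j$, so $s\in Z_{j}^{m^{\prime }}$ as well. That contradicts $j$ being assigned to $s$ at Round $m$ or later.

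\textbf{Case 2:} $i$ was eliminated before $j$ received $s$, that is, $\kappa (i)<m$. Then $i$'s assignment is final, $sP_{i}\hat{\mu}^{\kappa (i)}(i)$ and $i\succ _{s}j$. By the definition of $Z$, $s\in Z_{j}^{\kappa (i)+1}$, so $j$ could never hold $s$ afterwards, a contradiction.

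\textbf{Case 3:} $j$ was eliminated (or $s$ became underdemanded) before $i$ left. Then $s$ was underdemanded at that round, while Lemma 4 gives $sP_{i}$ of $i$'s assignment in that round. If $s$ was on $i$'s list at that round, this contradicts underdemandedness. If it had been truncated, the argument reduces to Case 1 via transitivity.

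I expect the main obstacle to be this bookkeeping across rounds, in particular showing that the truncations $Z$ propagate correctly through transitivity of the partial order $\succ _{s}$. This is exactly the point where the proof uses that $\succ $ is a partial order rather than an arbitrary relation.
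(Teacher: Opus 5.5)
The paper gives no proof of this lemma; it is quoted from Kitahara and Okumura (2026b), so there is no in-text argument to compare your route with. Your self-contained verification is correct in substance.

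In the fairness part, $j$ is always active in round $m=\min\{k,\kappa(s)\}$. Your Cases 1 and 2 ($\kappa(i)\geq m$ versus $\kappa(i)<m$) are therefore already exhaustive, and Case 3 is a sub-case of Case 1. Transitivity of $\succ_{s}$ is used exactly where it must be, to transfer a truncation of $s$ from $i$'s list to $j$'s. Two small corrections there:
\begin{itemize}
\item When $m<k$, Case 1 already needs the monotonicity lemma (Lemma 5 in this paper's numbering, not Lemma 4). It is what turns $sP_{i}\hat{\mu}^{k}(i)$ into $sP_{i}^{m}DA(G^{m})(i)$.
\item ``$s\in Z_{j}^{m'}$ as well'' should read ``either $s\in Z_{j}^{m'}$ or $s$ was already cut from $j$'s list''. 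The set $Z_{j}^{m'}$ only collects schools still acceptable under $P_{j}^{m'-1}$.
\end{itemize}

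The non-wastefulness part has two loose ends.
\begin{itemize}
\item Your first sub-case tacitly assumes $i\in I^{k'}$. Underdemandedness of $s$ in round $k'$ constrains only students still present then. So the case $\kappa(i)<k'$ with $s$ on $i$'s last list is not covered.
\item In the second sub-case, the witness $j$ need not have proposed to $s$ in round $k''-1$. The definition of $Z$ uses the original $P_{j}$, and $s$ may already have been cut from $j$'s list. She did propose to $s$ in Round 1, since monotonicity gives $sP_{j}\hat{\mu}^{1}(j)$, but you should say so.
\end{itemize}

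Both points disappear if you argue directly. If $sP_{i}\hat{\mu}^{k}(i)$, monotonicity gives $sP_{i}\hat{\mu}^{1}(i)$. So $i$ proposed to $s$ and was rejected in Round 1, where $P^{1}=P$, and hence $s\notin U^{1}$. Your counting fact then shows that every school outside $U^{1}$ is full at every $\hat{\mu}^{k}$. This holds whether the school is still active or was removed in some round $k'\geq 2$.

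Finally, the announced induction on $k$ is never actually used. Each round is handled through the stability of $DA(G^{m})$ in $G^{m}$ plus monotonicity, which is fine as it stands.
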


\subsection*{A.3 Proof of Proposition 3}

First, by the construction of the EADA mechanism in A.2, $\mu ^{\ast \ast }$
is obtained in Round 1; that is, $\hat{\mu}^{1}=\mu ^{\ast \ast }$. By Lemma
5, $\mu ^{\ast }$ weakly Pareto dominates $\mu ^{\ast \ast },$ because $\mu
^{\ast }$ is the result of the EADA mechanism.

Next, suppose that a student $i$ has justified envy toward $j$ at $\mu
^{\ast }$ under $\succ ^{\ast }$. Since $\mu ^{\ast \ast }$ has no justified
envy under $\succ ^{\ast }$, $\mu ^{\ast }\neq \mu ^{\ast \ast }$. By Lemma
5, $\mu ^{\ast }$ Pareto dominates $\mu ^{\ast \ast }$ for\textbf{\ }$I$.
Finally, let $s=\mu ^{\ast }\left( j\right) $. Since $\mu ^{\ast }$ is
weakly stable under $\left( \succ ^{O},\succ ^{A}\right) $, it cannot be the
case that both $i\succ _{s}^{O}j$ and $i\succ _{s}^{A}j$. On the other hand,
because $\succ ^{\ast }$ is a linear-order extension profile of $\left(
\succ ^{O}\cap \succ ^{A}\right) $, $i\succ _{s}^{\ast }j$ rules out $j\succ
_{s}^{O}i$ and $j\succ _{s}^{A}i$, simultaneously. Hence, exactly one of the
two original priority orders ranks $i$ above $j$, and therefore either $%
j\succ _{s}^{O}i$ or $j\succ _{s}^{A}i$.

\subsection*{A.4 Proof of Theorem 1}

Now we prove Theorem 1. Let $\succ ^{O},\succ ^{A}\in \mathcal{L}%
^{\left\vert S\right\vert }$ with $\succ _{s}^{O}\neq \succ _{s}^{A}$ for
some $s\in S$. For notational simplicity, let 
\begin{equation*}
\phi ^{\ast }\left( \succ ^{O},\succ ^{A}\right) =EA\left( \left( \succ
^{O}\cap \succ ^{A}\right) ,\succ ^{A}\right) =\mu ^{\ast }.
\end{equation*}%
Moreover, fix an arbitrary set $I^{\prime }\supseteq I\left( \succ
^{O},\succ ^{A}\right) $.

We show that $\mu ^{\ast }$ is an $I^{\prime }$\textbf{-}optimally weakly
stable\textbf{\ }under $\left( \succ ^{O},\succ ^{A}\right) .$ To this end,
we first establish the following result.

\begin{claim}
Consider the EADA mechanism with $\left( \succ ^{O},\succ ^{A}\right) ,$
which yields $EA\left( \left( \succ ^{O}\cap \succ ^{A}\right) ,\succ
^{A}\right) $. Fix any $k=1,\cdots ,K$. Suppose that $\nu $ is a stable
matching under $\left( \succ ^{O}\cap \succ ^{A}\right) $ that Pareto
dominates $\hat{\mu}^{k}$ for $I^{\prime }$ and $\hat{\mu}^{k}\left(
i^{\prime }\right) R_{i^{\prime }}\nu \left( i^{\prime }\right) $ for all $%
i^{\prime }\in \mathcal{E}^{k-1}$, where $\mathcal{E}^{0}=\emptyset $. Then, 
$\hat{\mu}^{k}\left( j\right) R_{j}\nu \left( j\right) $ for all $j\in 
\mathcal{E}^{k}(=\mathcal{E}^{k-1}\cup E^{k})$.
\end{claim}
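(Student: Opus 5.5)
The plan is a rejection-chain argument inside the Round-$k$ subproblem. It uses the stability of $\nu$ under $(\succ^{O}\cap\succ^{A})$ together with one key fact about non-improved students.

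First I isolate that fact. If $i\notin I(\succ^{O},\succ^{A})$, then for every $s$ and $j$, $i\succ^{A}_{s}j$ implies $i\succ^{O}_{s}j$. Otherwise $j\succ^{O}_{s}i$ and $i\succ^{A}_{s}j$, and $i$ would be improved. Hence for a student outside $I'$, every $\succ^{A}$-comparison in her favour lies in $(\succ^{O}\cap\succ^{A})$. Since the EADA run uses $\succ^{*}=\succ^{A}$, every rejection in the SPDA substeps is decided by $\succ^{A}$. So whenever a student $i\notin I'$ displaces someone at $s$ and $sP_{i}\nu(i)$, stability of $\nu$ under the intersection is violated.

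Now suppose, for contradiction, that some $j\in\mathcal{E}^{k}$ has $s=\nu(j)P_{j}\hat{\mu}^{k}(j)$. By hypothesis $j\in E^{k}$. I split according to where $s$ sits in Round $k$.

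\emph{Case (a): $s\in Z^{k}_{j}$.} Then some $j'\in E^{k-1}$ has $sP_{j'}\hat{\mu}^{k-1}(j')$ and $(j',j)\in(\succ^{O}_{s}\cap\succ^{A}_{s})$. Since $j'$ is fixed, $\hat{\mu}^{k}(j')=\hat{\mu}^{k-1}(j')$, and by hypothesis $\hat{\mu}^{k}(j')R_{j'}\nu(j')$. So $sP_{j'}\nu(j')$ while $j\in\nu(s)$, and $j'$ has justified envy at $\nu$ under the intersection. This contradicts stability of $\nu$.

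\emph{Case (b): $s\in U^{k'}$ for some $k'<k$.} I would use Lemma 4 ($\hat{\mu}$ weakly improves across rounds) together with the fact that $j$ was present in Round $k'$ and did not demand the underdemanded $s$. The point is that $j$ weakly preferred her Round-$k'$ assignment to $s$. The only subtlety is that the preferences $P^{k'}$ may have deleted $s$; in that event I fall back to the Case (a) argument applied at the round where $s$ was deleted.

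\emph{Case (c): $s\in S^{k}$ and $s\notin Z^{k}_{j}$.} Since $j$ is at an underdemanded school or unmatched, $s$ cannot be underdemanded (otherwise $\hat{\mu}^{k}(j)R_{j}s$). So $j$ proposed to $s$ in the Round-$k$ SPDA and was rejected.

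The main work, and the step I expect to be the obstacle, is Case (c). Consider all students of $I^{k}$ who strictly prefer $\nu$ to $\hat{\mu}^{k}$. Each was rejected by her $\nu$-school during the Round-$k$ SPDA. Take the last such rejection in the SPDA order, say at school $s'$. When it occurs, $s'$ holds $q_{s'}$ tentative students, all $\succ^{A}_{s'}$-above the rejected student $j''$. At least one of them, $i$, is not in $\nu(s')$, because $j''\in\nu(s')$. Student $i$ is never rejected from $s'$ afterwards, by the "last rejection" choice (the standard Gale--Shapley/Kesten argument), so $\hat{\mu}^{k}(i)=s'$.

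If $\nu(i)P_{i}s'$, then $i$ was rejected by $\nu(i)$ earlier. That rejection happens after $i$ proposes to $s'$, contradicting the choice of the last rejection. Hence $s'P_{i}\nu(i)$, so $i$ is strictly worse off at $\nu$.

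There are two sub-cases for $i$. If $i\in I'$, this contradicts $\nu$ Pareto dominating $\hat{\mu}^{k}$ for $I'$. If $i\notin I'$, then $i$ is not improved, and the key fact gives $(i,j'')\in(\succ^{O}_{s'}\cap\succ^{A}_{s'})$. Then $i$ has justified envy toward $j''$ at $\nu$ under the intersection, contradicting stability.

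Students in $\mathcal{E}^{k-1}$ are covered by hypothesis. The delicate point is making the rejection-chain ordering rigorous when preferences $P^{k}$ are truncated: one must check that truncated schools never appear as $\nu$-schools of improving students, which is exactly Case (a). This completes the plan for $\hat{\mu}^{k}(j)R_{j}\nu(j)$ for all $j\in\mathcal{E}^{k}$.
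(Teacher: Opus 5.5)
Your key fact about non-improved students is correct, and the paper uses it in exactly this way. Cases (a) and (b) are also fine once truncations from earlier rounds are included, as you note.

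The gap is in Case (c). The inference ``if $\nu(i)P_{i}s'$, then $i$ was rejected by $\nu(i)$ after proposing to $s'$'' is wrong. Student $i$ proposes down her list, so her rejection by $\nu(i)$ happens \emph{before} she proposes to $s'$, and hence before the last rejection you selected. Nothing therefore prevents the holder $i$ from being a strictly improving student herself. Separately, ``$i$ is never rejected from $s'$ afterwards'' does not follow from your choice, since $s'\neq\nu(i)$. That part could be repaired by taking a final holder of $s'$, but the worse-off conclusion cannot.

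After its first sentence, Case (c) never uses $j\in E^{k}$, and it then proves too much. Let $I=\{a,b,c\}$ and $S=\{x,y,z\}$ with unit capacities, $P_{a}:y,x,z$, $P_{b}:x,y,z$, $P_{c}:x,z,y$, $\succ^{O}_{x}:a,b,c$, $\succ^{A}_{x}:a,c,b$, $\succ^{O}_{y}=\succ^{A}_{y}:b,a,c$, and $\succ^{O}_{z}=\succ^{A}_{z}$ arbitrary. In the Round-1 SPDA, $x$ rejects $b$, then $y$ rejects $a$, then $x$ rejects $c$. So $\hat{\mu}^{1}(a)=x$, $\hat{\mu}^{1}(b)=y$, $\hat{\mu}^{1}(c)=z$, and $E^{1}=\{c\}$. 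Now take $\nu(a)=y$, $\nu(b)=x$, $\nu(c)=z$, which is the EADA outcome. It is stable under $(\succ^{O}\cap\succ^{A})$, since $b$ and $c$ are incomparable at $x$, and it Pareto dominates $\hat{\mu}^{1}$ for $I'=I$. Your last rejection is $a$ by $y$, and the holder is $i=b$, yet $\nu(b)=xP_{b}y=\hat{\mu}^{1}(b)$. No contradiction is available here, correctly so. Improvement cycles among students of $I^{k}\setminus E^{k}$ are exactly what later EADA rounds carry out, so the timing of rejections alone cannot exclude them.

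The missing idea is to exploit $j\in E^{k}$ through a counting argument, as the paper does.
\begin{itemize}
\item Let $S^{\ast}$ be the set of schools $\nu(i')$ with $\nu(i')P_{i'}\hat{\mu}^{k}(i')$. Since $\hat{\mu}^{k}(j)$ is underdemanded in Round $k$ or equals $\emptyset$, the hypothesis on $\mathcal{E}^{k-1}$ together with your truncation argument gives $\hat{\mu}^{k}(j)\notin S^{\ast}$, while $\nu(j)\in S^{\ast}$.
\item So $j$ is placed in $S^{\ast}$ by $\nu$ but not by $\hat{\mu}^{k}$. If no student were placed in $S^{\ast}$ by $\hat{\mu}^{k}$ but not by $\nu$, some $s\in S^{\ast}$ would have a vacancy at $\hat{\mu}^{k}$ although some student strictly prefers $s$ to her $\hat{\mu}^{k}$-school. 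That contradicts the non-wastefulness of $\hat{\mu}^{k}$.
\item Hence there is $i_{d}$ with $\hat{\mu}^{k}(i_{d})=s\in S^{\ast}$ and $\nu(i_{d})\notin S^{\ast}$. This $i_{d}$ is strictly worse off at $\nu$, so $i_{d}\notin I'$ and is not improved. Meanwhile some $j_{d}\in\nu(s)$ strictly prefers $s$ to $\hat{\mu}^{k}(j_{d})$.
\item Your key fact and the stability of $\nu$ give $j_{d}\succ^{A}_{s}i_{d}$. The truncation and underdemand reasoning of your Cases (a) and (b), applied to $j_{d}$, shows that $s$ is still available to $j_{d}$ in the Round-$k$ SPDA.
\item So $j_{d}$ proposed to $s$ and was rejected, although $s$ ends up holding $i_{d}$, whom $j_{d}$ outranks under $\succ^{A}_{s}$. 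This is a contradiction.
\end{itemize}
The worse-off student has to come from the imbalance created by $j$, not from the order of rejections.
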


\textbf{Proof.} Suppose not; that is, for some $k=1,\cdots ,K$, there is $%
i\in E^{k}$ such that $\nu \left( i\right) P_{i}\hat{\mu}^{k}\left( i\right) 
$. Let $\kappa \left( i\right) =k$. Since $i\in E^{k}$, either $\hat{\mu}%
^{k}\left( i\right) \in U^{k}$ or $\hat{\mu}^{k}\left( i\right) =\emptyset $.

First, we show that there is a student $i_{d}\in I\setminus I^{\prime }$
such that $\hat{\mu}^{k}\left( i_{d}\right) P_{i_{d}}\nu \left( i_{d}\right) 
$ and a student $j_{d}$ such that $\hat{\mu}^{k}\left( i_{d}\right) =\nu
\left( j_{d}\right) $ and $\nu \left( j_{d}\right) P_{j_{d}}\hat{\mu}%
^{k}\left( j_{d}\right) $. Define 
\begin{equation*}
S^{\ast }=\left\{ s\in S\text{ }\left\vert \text{ }s=\nu \left( i^{\prime
}\right) \text{ and }\nu \left( i^{\prime }\right) P_{i^{\prime }}\hat{\mu}%
^{k}\left( i^{\prime }\right) \text{ for some }i^{\prime }\in I\right.
\right\} \text{.}
\end{equation*}%
Then, by the assumption above, $\nu \left( i\right) \in S^{\ast }$.

By contrast, $\hat{\mu}^{k}\left( i\right) \in \left( S\cup \left\{
\emptyset \right\} \right) \setminus S^{\ast }$. We show this fact. Since $%
i\in E^{k}$, either $\hat{\mu}^{k}\left( i^{\prime }\right) =\emptyset $ or $%
\hat{\mu}^{k}\left( i^{\prime }\right) \in U^{k}$. Since the result is
immediate in the former case, suppose $\hat{\mu}^{k}\left( i^{\prime
}\right) =s\in U^{k}$. Moreover, toward a contradiction, we assume $s\in
S^{\ast }$. Then there exists a student $i^{\prime }\in I$ such that 
\begin{equation*}
\nu \left( i^{\prime }\right) =sP_{i^{\prime }}\hat{\mu}^{k}\left( i^{\prime
}\right) .
\end{equation*}%
Since $s$ is underdemanded in Round $k$, no student remaining in that round
prefers $s$ to her assignment under $\hat{\mu}^{k}$. Since $i^{\prime }$
must have been eliminated before Round $k$, $i^{\prime }\in \mathcal{E}%
^{k-1} $. However, $\nu \left( i^{\prime }\right) P_{i^{\prime }}\hat{\mu}%
^{k}\left( i^{\prime }\right) $ contradicts the assumption that $\hat{\mu}%
^{k}\left( i^{\prime }\right) R_{i^{\prime }}\nu \left( i^{\prime }\right) $
for every $i^{\prime }\in \mathcal{E}^{k-1}$. Therefore, $\hat{\mu}%
^{k}\left( i^{\prime }\right) \notin S^{\ast }$.

Let 
\begin{eqnarray*}
I_{1} &=&\left\{ i^{\prime }\in I\text{ }\left\vert \text{ }\nu \left(
i^{\prime }\right) \in S^{\ast }\text{ and }\hat{\mu}^{k}\left( i^{\prime
}\right) \in \left( S\cup \left\{ \emptyset \right\} \right) \setminus
S^{\ast }\right. \right\} , \\
I_{2} &=&\left\{ i^{\prime }\in I\text{ }\left\vert \text{ }\nu \left(
i^{\prime }\right) \in \left( S\cup \left\{ \emptyset \right\} \right)
\setminus S^{\ast }\text{ and }\hat{\mu}^{k}\left( i^{\prime }\right) \in
S^{\ast }\right. \right\} .
\end{eqnarray*}%
Then, $i\in I_{1}\setminus I_{2}$. We show $I_{2}\neq \emptyset $. Suppose
not; that is, $I_{2}=\emptyset $. Since $I_{1}\neq \emptyset $ and $%
I_{2}=\emptyset $, there is $s\in S^{\ast }$ such that $\left\vert \hat{\mu}%
^{k}\left( s\right) \right\vert <\left\vert \nu \left( s\right) \right\vert
\leq q_{s}$. Since $s\in S^{\ast }$, there is $i^{\prime }$ such that $s=\nu
\left( i^{\prime }\right) P_{i^{\prime }}\hat{\mu}^{k}\left( i^{\prime
}\right) $. Moreover, by Lemma 6, $\hat{\mu}^{k}$ is stable under $\left(
\succ ^{O}\cap \succ ^{A}\right) $. However, $\left\vert \hat{\mu}^{k}\left(
s\right) \right\vert <q_{s}$ and $sP_{i^{\prime }}\hat{\mu}^{k}\left(
i^{\prime }\right) $ contradict the non-wastefulness of $\hat{\mu}^{k}$.
Hence $I_{2}\neq \emptyset $.

Then, we show that any student $i_{d}\in I_{2}$ must belong to $I\setminus
I^{\prime }$. Since $\nu \left( i_{d}\right) \notin S^{\ast },$ $\hat{\mu}%
^{k}\left( i_{d}\right) R_{i_{d}}\nu \left( i_{d}\right) $. Moreover, $\nu
\left( i_{d}\right) \notin S^{\ast }$ and $\hat{\mu}^{k}\left( i_{d}\right)
\in S^{\ast }$ imply that $\nu \left( i_{d}\right) \neq \hat{\mu}^{k}\left(
i_{d}\right) $. Therefore, $\hat{\mu}^{k}\left( i_{d}\right) P_{i_{d}}\nu
\left( i_{d}\right) $. Since $\nu $ Pareto dominates $\hat{\mu}^{k}$ for $%
I^{\prime }$, it follows that $i_{d}\in I\setminus I^{\prime }$. Further,
since $\hat{\mu}^{k}\left( i_{d}\right) \in S^{\ast }$, there is $j_{d}\neq
i_{d}$ such that $s=\nu \left( j_{d}\right) $ and $\nu \left( j_{d}\right)
P_{j_{d}}\hat{\mu}^{k}\left( j_{d}\right) $. We let $\hat{\mu}^{k}\left(
i_{d}\right) =s$.

Second, we show $j_{d}\succ _{s}^{A}i_{d}$. Since $\nu \left( j_{d}\right) =%
\hat{\mu}^{k}\left( i_{d}\right) =s$, $sP_{i_{d}}\nu \left( i_{d}\right) $
and $\nu $ is stable under $\left( \succ ^{O}\cap \succ ^{A}\right) $, $%
\left( i_{d},j_{d}\right) \notin \left( \succ _{s}^{O}\cap \succ
_{s}^{A}\right) $. That is, it cannot be the case that both $i_{d}\succ
_{s}^{O}j_{d}$ and $i_{d}\succ _{s}^{A}j_{d}$. Since $i_{d}\in I\setminus
I^{\prime }$, and $I^{\prime }\supseteq I\left( \succ ^{O},\succ ^{A}\right) 
$, $i_{d}$ is not an improved student; that is, $i_{d}\notin I\left( \succ
^{O},\succ ^{A}\right) $. Hence it cannot be the case that $j_{d}\succ
_{s}^{O}i_{d}$ and $i_{d}\succ _{s}^{A}j_{d}$. Therefore, $j_{d}\succ
_{s}^{A}i_{d}$.

Third, we show $\kappa \left( s\right) >k;$ that is, $s$ and $i_{d}$ have
not yet been eliminated by round $k$. Suppose not; that is, $\kappa \left(
s\right) \leq k$. Since $\nu \left( j_{d}\right) =s$ and $sP_{j_{d}}\hat{\mu}%
^{k}\left( j_{d}\right) $, Lemma 5 implies $\nu \left( j_{d}\right) P_{j_{d}}%
\hat{\mu}^{\kappa \left( s\right) }\left( j_{d}\right) $. Since $s$ is
underdemanded at $\kappa \left( s\right) $, it follows that $s$ must have
been removed from the preference list of $j_{d}$ at that round; that is, $%
\emptyset P_{j_{d}}^{\kappa \left( s\right) }s$ Therefore, there is $%
j^{\prime }\in I$ such that $\kappa \left( j^{\prime }\right) <\kappa \left(
s\right) $, $sP_{j^{\prime }}\hat{\mu}^{k}\left( j^{\prime }\right) $ and $%
\left( j^{\prime },j_{d}\right) \in \left( \succ _{s}^{O}\cap \succ
_{s}^{A}\right) $. Since $\nu $ is stable under $\left( \succ ^{O}\cap \succ
^{A}\right) ,$ $\nu \left( j^{\prime }\right) R_{j^{\prime }}s$ and thus $%
\nu \left( j^{\prime }\right) P_{j^{\prime }}\hat{\mu}^{k}\left( j^{\prime
}\right) $. Moreover, since $\kappa \left( j^{\prime }\right) <\kappa \left(
s\right) \leq k$, $j^{\prime }\in \mathcal{E}^{k-1}$. However, these facts
contradict $\hat{\mu}^{k}\left( i^{\prime }\right) R_{i^{\prime }}\nu \left(
i^{\prime }\right) $ for all $i^{\prime }\in \mathcal{E}^{k-1}$. Therefore, $%
s$ and $i_{d}$ have not yet been eliminated at Round $k$ yet; that is, $%
\kappa \left( s\right) >k$.

Moreover, since $s=\nu \left( j_{d}\right) $ and $sP_{j_{d}}\hat{\mu}%
^{k}\left( j_{d}\right) $, $j_{d}\notin \mathcal{E}^{k-1}$, and hence $%
\kappa \left( j_{d}\right) \geq k$. Therefore, in the SPDA of Round $k$, $%
i_{d}$ is accepted by $s$ whereas $j_{d}$ is not. That is, either $j_{d}$ is
rejected by $s$ or $\emptyset P_{j_{d}}^{k}s$. Since $j_{d}\succ
_{s}^{A}i_{d}$, the former cannot occur, and thus $\emptyset P_{j_{d}}^{k}s.$
Hence, there exists $j^{\prime }$ such that $\kappa \left( j^{\prime
}\right) <k$, $\left( j^{\prime },j_{d}\right) \in \left( \succ _{s}^{O}\cap
\succ _{s}^{A}\right) $, and $sP_{j^{\prime }}\hat{\mu}^{k}\left( j^{\prime
}\right) $. Since $\nu $ is stable under $\left( \succ ^{O}\cap \succ
^{A}\right) ,$ 
\begin{equation*}
\nu \left( j^{\prime }\right) R_{j^{\prime }}sP_{j^{\prime }}\hat{\mu}%
^{k}\left( j^{\prime }\right) .
\end{equation*}%
However, since $j^{\prime }\in \mathcal{E}^{k-1}$, this contradicts the
assumption that $\hat{\mu}^{k}\left( i^{\prime }\right) R_{i^{\prime }}\nu
\left( i^{\prime }\right) $ for all $i^{\prime }\in \mathcal{E}^{k-1}$. This
contradiction completes the proof of the claim. \textbf{Q.E.D.}\newline

We now prove Theorem 1. Suppose not; that is, there is a weakly stable
matching under $\left( \succ ^{O},\succ ^{A}\right) $ denoted by $\nu $ such
that $\nu $ Pareto dominates $\mu ^{\ast }$ for $I^{\prime }$. By
Proposition 1, $\nu $ is a stable matching under $\left( \succ ^{O}\cap
\succ ^{A}\right) $. By Lemma 5, $\nu $ also Pareto dominates $\hat{\mu}^{k}$
for $I^{\prime }$ for all $k=1,\cdots ,K$. By Claim 1, $\mathcal{E}%
^{0}=\emptyset $ implies that $\hat{\mu}^{1}\left( j\right) R_{j}\nu \left(
j\right) $ for all $j\in \mathcal{E}^{1}\left( =E^{1}\right) $. Since $\hat{%
\mu}^{1}\left( j\right) =\hat{\mu}^{2}\left( j\right) $ for all $j\in 
\mathcal{E}^{1}$ and Claim 1 is satisfied, $\hat{\mu}^{2}\left( j\right)
R_{j}\nu \left( j\right) $ for all $j\in \mathcal{E}^{2}\left( =E^{2}\cup
E^{1}\right) $. Likewise, we have $\mu ^{\ast }\left( j\right) =\hat{\mu}%
^{K}\left( j\right) R_{j}\nu \left( j\right) $ for all $j\in \mathcal{E}%
^{K}=I$, but this contradicts that$\ \nu $ Pareto dominates $\mu ^{\ast }$
for $I^{\prime }$.

\subsection*{A.5 Proof of Theorem 2}

Suppose that $\succ ^{A}$ represents a stronger improvement of $\succ ^{O}$
than $\succ ^{A\prime }$, and fix any $I^{\prime }\supseteq I\left( \succ
^{O},\succ ^{A}\right) $.

We first show that every matching that is weakly stable under $\left( \succ
^{O},\succ ^{A\prime }\right) $ is also weakly stable under $\left( \succ
^{O},\succ ^{A}\right) $. Let $\mu $ be any weakly stable matching under $%
\left( \succ ^{O},\succ ^{A\prime }\right) $. By Proposition 1, $\mu $ is
stable under $\left( \succ _{s}^{O}\cap \succ _{s}^{A\prime }\right) $. By
Lemma 2, $\mu $ is also stable under $\left( \succ _{s}^{O}\cap \succ
_{s}^{A}\right) $. Hence, again by Proposition 1, $\mu $ is weakly stable
under $\left( \succ ^{O},\succ ^{A}\right) $.

Thus, $\phi ^{\ast }\left( \succ ^{O},\succ ^{A\prime }\right) $ is weakly
stable under $\left( \succ ^{O},\succ ^{A}\right) $. By Theorem 1, since $%
I^{\prime }\supseteq I\left( \succ ^{O},\succ ^{A}\right) $, $\phi ^{\ast
}\left( \succ ^{O},\succ ^{A}\right) $ is not Pareto dominated for $%
I^{\prime }$ by any weakly stable matching under $\left( \succ ^{O},\succ
^{A}\right) $. Therefore, $\phi ^{\ast }\left( \succ ^{O},\succ ^{A}\right) $
is not Pareto dominated for $I^{\prime }$ by $\phi ^{\ast }\left( \succ
^{O},\succ ^{A\prime }\right) $. Therefore, $\phi ^{\ast }$ is responsive to
improvements.

\subsection*{A.6 Another notion of responsiveness}

We show how our notion of responsiveness is related to that introduced by
Jiao and Shen (2021) and Jiao et al. (2022).

Jiao and Shen (2021) (as well as Jiao and Tian (2018) and Jiao et al.
(2022)) consider the following notion of improvement, which differs from
ours. For $I^{\prime }\subseteq I$, we say that $\succ $ is an $I^{\prime }$-%
\textbf{improvement} of $\succ ^{\prime }$ if, for every $s\in S$, (JS1) $%
i\succ _{s}^{\prime }j$ and $i\in I^{\prime }$ imply $i\succ _{s}j$, and
(JS2) $i\succ _{s}^{\prime }j$ and $i,j\in I\setminus I^{\prime }$ imply $%
i\succ _{s}j$.\footnote{%
The notion of improvement introduced here is slightly more general than that
introduced by Jiao and Tian (2018).} Note that $\succ $ is an $I^{\prime }$%
-improvement of itself for any $I^{\prime }\subseteq I$.

We first establish the relationship between their notion of improvement and
ours.

\begin{remark}
Let $I^{\prime }\subseteq I$ and $\succ ^{O},\succ ^{A},\succ ^{A\prime }$
be such that (i) both $\succ ^{A}$ and $\succ ^{A\prime }$ are $I^{\prime }$%
-improvements of $\succ ^{O}$ for $I^{\prime }\subseteq I,$ (ii) $\succ ^{A}$
is an $I^{\prime }$-improvement of $\succ ^{A\prime }$\textbf{\ }for $%
I^{\prime }\subseteq I$. Then, $\succ ^{A}$ represents a stronger
improvement of $\succ ^{O}$ than $\succ ^{A\prime }$.
\end{remark}

\textbf{Proof. }Suppose that $i\succ _{s}^{O}j$ and $j\succ _{s}^{A\prime }i$%
. Since $\succ ^{A\prime }$ is an $I^{\prime }$-improvement of $\succ ^{O}$
for $I^{\prime }\subseteq I$, $j\in I^{\prime }$ and $i\in I\setminus
I^{\prime }$. By (JS1) and $j\in I^{\prime }$, $j\succ _{s}^{A\prime }i$
implies $j\succ _{s}^{A}i$. Thus, $\succ ^{A}$ represents a stronger
improvement of $\succ ^{O}$ than $\succ ^{A\prime }$. \textbf{Q.E.D.}\newline

\begin{remark}
Let $\succ ^{O},\succ ^{A},\succ ^{A\prime }$ be such that (i) both $\succ
^{A}$ and $\succ ^{A\prime }$ are $I^{\prime }$-improvements of $\succ ^{O}$
for $I^{\prime }\subseteq I,$ (ii) $\succ ^{A}$ is an $I^{\prime }$%
-improvement of $\succ ^{A\prime }$\textbf{\ }for $I^{\prime }\subseteq I$.
Then, $\phi ^{\ast }\left( \succ ^{O},\succ ^{A}\right) $ is not Pareto
dominated by $\phi ^{\ast }\left( \succ ^{O},\succ ^{A\prime }\right) $ for $%
I^{\prime }$.
\end{remark}

\textbf{Proof.} Since $\succ ^{A}$ is an $I^{\prime }$-improvement of $\succ
^{O}$, (JS1) and (JS2) imply that no student outside $I^{\prime }$ can be an
improved student under $\left( \succ ^{O},\succ ^{A}\right) $. Hence, $%
I\left( \succ ^{O},\succ ^{A}\right) \subseteq I^{\prime }$. By Theorem 2, $%
\phi ^{\ast }\left( \succ ^{O},\succ ^{A}\right) $ is not Pareto dominated
by $\phi ^{\ast }\left( \succ ^{O},\succ ^{A\prime }\right) $ for $I^{\prime
}$. \textbf{Q.E.D.}\newline

In Jiao and Shen (2021), $I^{\prime }$ denotes the set of minority students.
Given two priority profiles $\succ ^{A}$ and $\succ ^{A\prime }$ such that $%
\succ ^{A}$ is an $I^{\prime }$-improvement of $\succ ^{A\prime }$, a
mechanism is said to be responsive to priority-based affirmative action if
the matching induced by $\succ ^{A\prime }$ does not Pareto dominate the
matching induced by $\succ ^{A}$ for $I^{\prime }$. The preceding remarks
show that our mechanism $\phi ^{\ast }$ satisfies this responsiveness
whenever both $\succ ^{A}$ and $\succ ^{A\prime }$ are $I^{\prime }$%
-improvements of $\succ ^{O}$. Indeed, since $I\left( \succ ^{O},\succ
^{A}\right) \subseteq I^{\prime }$, our responsiveness requirement applies
to the entire minority group $I^{\prime }$, including minority students
whose priorities are not improved.

In Jiao and Shen (2021), $I^{\prime }$ denotes the set of minority students.
Given two priority profiles $\succ ^{A}$ and $\succ ^{A\prime }$ such that $%
\succ ^{A}$ is an $I^{\prime }$-improvement of $\succ ^{A\prime }$, a
mechanism is said to be responsive to priority-based affirmative action if
the matching induced by $\succ ^{A\prime }$ does not Pareto dominate that
induced by $\succ ^{A}$ for $I^{\prime }$. Although Jiao and Shen (2021) do
not explicitly specify an original priority profile, their responsiveness
requirement can be compared with ours by fixing a baseline profile $\succ
^{O}$. The preceding remarks show that our mechanism $\phi ^{\ast }$
satisfies their responsiveness requirement whenever both $\succ ^{A}$ and $%
\succ ^{A\prime }$ are $I^{\prime }$-improvements of $\succ ^{O}$. Indeed,
since $I\left( \succ ^{O},\succ ^{A}\right) \subseteq I^{\prime }$, our
responsiveness requirement applies to the entire minority group $I^{\prime }$%
, including minority students whose priorities are not improved relative to $%
\succ ^{O}$.

\end{document}